\newtheorem{assumption}{Assumption}
	\pgfplotsset{compat=1.12}
	\pgfplotsset{compat=newest} 
	\pgfplotsset{plot coordinates/math parser=false} 
	\newlength\figureheight 
	\newlength\figurewidth 
\else \PassOptionsToPackage{disable}{todonotes} \fi 
\newcommandx{\ME}[2][1=]{\todo[linecolor=blue,backgroundcolor=black!10,bordercolor=blue,#1]{ME: #2}}
\newcommandx{\YS}[2][1=]{\todo[linecolor=blue,backgroundcolor=black!10,bordercolor=blue,#1]{YS: #2}}
\newcommandx{\JJ}[2][1=]{\todo[linecolor=purple,backgroundcolor=purple!10,bordercolor=blue,#1]{JJ: #2}}
\newcommandx{\XX}[2][1=]{\todo[linecolor=blue,backgroundcolor=black!10,bordercolor=blue,#1]{#2}}
\newcommandx{\CF}[2][1=]{\todo[linecolor=red,backgroundcolor=black!10,bordercolor=red,#1]{CF: #2}}
\newcommand{\source}[1]{\relax}
\newcommand{\eat}[1]{\relax}
\newcommand{\tabref}		[1]{Table~\ref{#1}}
\newcommand{\figref}		[1]{Fig.~\ref{#1}}
\newcommand{\thmref}		[1]{Theorem~\ref{#1}}
\newcommand{\lemref}		[1]{Lemma~\ref{#1}}
\newcommand{\propref}		[1]{Proposition~\ref{#1}}
\newcommand{\exref}			[1]{Example~\ref{#1}}
\newcommand{\algref}		[1]{Algorithm~\ref{#1}}
\newcommand{\secref}		[1]{Section~\ref{#1}}
\newcommand{\subsecref}		[1]{Section~\ref{#1}}
\newcommand{\subsubsecref}	[1]{Section~\ref{#1}}
\newcommand{\defref}		[1]{Definition~\ref{#1}}
\newcommand{\assumpref}		[1]{Assumption~\ref{#1}}
\newcommand{\remref}		[1]{Remark~\ref{#1}}
\newcommand{\probref}		[1]{Problem~\ref{#1}}
\newcommand{\lineref}		[1]{Line~\ref{#1}}
\newcommand{\linesref}		[2]{Lines~\ref{#1}--\ref{#2}}
\newcommand{\tabref}		[1]{\hyperref[#1]{Table~\ref*{#1}}}
\newcommand{\figref}		[1]{\hyperref[#1]{Fig.~\ref*{#1}}}
\newcommand{\thmref}		[1]{\hyperref[#1]{Theorem~\ref*{#1}}}
\newcommand{\lemref}		[1]{\hyperref[#1]{Lemma~\ref*{#1}}}
\newcommand{\propref}		[1]{\hyperref[#1]{Proposition~\ref*{#1}}}
\newcommand{\exref}			[1]{\hyperref[#1]{Example~\ref*{#1}}}
\newcommand{\algref}		[1]{\hyperref[#1]{Algorithm~\ref*{#1}}}
\newcommand{\secref}		[1]{\hyperref[#1]{Section~\ref*{#1}}}
\newcommand{\subsecref}		[1]{\hyperref[#1]{Section~\ref*{#1}}}
\newcommand{\subsubsecref}	[1]{\hyperref[#1]{Section~\ref*{#1}}}
\newcommand{\defref}		[1]{\hyperref[#1]{Definition~\ref*{#1}}}
\newcommand{\remref}		[1]{\hyperref[#1]{Remark~\ref*{#1}}}
\newcommand{\lineref}		[1]{\hyperref[#1]{Line~\ref*{#1}}}
\newcommand{\linesref}		[2]{\hyperref[#1]{Lines~\ref*{#1}--\ref*{#2}}}
\newcommand{\Booleans}{\mathbb{B}}
\newcommand{\Naturals}{\mathbb{N}}
\newcommand{\Reals}{\mathbb{R}}
\newcommand{\R}{\mathbb{R}} 
\newcommand{\Integers}{\mathbb{Z}}
\newcommand{\ie}{i.e., }
\newcommand{\eg}{e.g., }
\DeclareMathOperator{\atantwo}{atan2}
\newcommand{\EmptyWord}{\varepsilon} 
\newcommand{\Uniform}{\mathcal{U}} 
  \def\underbar#1{\underline{\sbox\tw@{$#1$}\dp\tw@=-0.5pt\box\tw@}}
  \def\uunderbar#1{\underline{\sbox\tw@{$\scalebox{0.9}{$#1$}$}\dp\tw@=1.2pt\box\tw@}}
  \def\overbar#1{\overline{\sbox\tw@{$#1$}\ht\tw@=0.9\ht\tw@\dp\tw@\z@\box\tw@}}
  \def\ooverbar#1{\overline{\sbox\tw@{$\scalebox{0.9}{$#1$}$}\ht\tw@=0.95\ht\tw@\dp\tw@\z@\box\tw@}}
\newtheorem{problem}{Problem}
\newtheorem{remark}{Remark}
\newtheorem{definition}{Definition}
\newtheorem{example}{Example}
\renewcommand{\emptyset}{\varnothing} 
\newcommand{\Vertiport}{\mathcal{V}}
\newcommand{\Observer}{\mathcal{O}}
\newcommand{\Noise}{\eta} 
\newcommand{\TrueNoise}{\eta} 
\newcommand{\NoiseValue}{L}
\newcommand{\BoundedVarAzi}{\mu_{\Azimuth}}
\newcommand{\BoundedVarAct}{\mu_{\text{act}}}
\newcommand{\Instant}{\mathrm{in}} 
\newcommand{\Average}{\mathrm{eq}} 
\newcommand{\InsNoiseThreshold}{{{\overline{\Noise}}_{\Instant}}}
\newcommand{\AvgNoiseThreshold}{{{\overline{\Noise}}_{\Average}}}
\newcommand{\Window}{{\Delta \mathrm{t}}}
\newcommand{\NAZ}{\mathcal{Z}}
\newcommand{\Ordinance}{\phi}
\newcommand{\Horizon}{T}
\newcommand{\True}{\top}
\newcommand{\False}{\bot}
\newcommand{\Speed}{v} 
\newcommand{\Velocity}{v} 
\newcommand{\Rpm}{\rho} 
\newcommand{\RPM}{\rho} 
\newcommand{\PosX}{x}
\newcommand{\PosY}{y}
\newcommand{\PosZ}{z}
\newcommand{\Heading}{\theta} 
\newcommand{\State}{\xi}
\newcommand{\StateSpace}{\Xi}
\newcommand{\InputDomain}{\StateSpace}
\newcommand{\Plan}{\pi}
\newcommand{\Planset}{\Pi}
\newcommand{\MotionModel}{f}
\newcommand{\Start}{\mathrm{o}}
\newcommand{\Target}{{f}}
\newcommand{\Goal}{\mathrm{g}}
\newcommand{\NoiseModel}{\NN}
\newcommand{\NAZmar}{\{\NAZ^{\delta}_j\}_{1}^{\NAZCt}}
\newcommand{\Node}{q}
\newcommand{\Rand}{\mathsf{rnd}} 
\newcommand{\fnSampleState}{\fnSample}
\newcommand{\fnGetKinoDist}{\Fn{getKinoDist}} 
\newcommand{\Near}{\mathsf{near}} 
\newcommand{\New}{\mathsf{new}} 
\newcommand{\fnSteer}{\Fn{steer}} 
\newcommand{\fnSteerPBS}{\Fn{steerPBS}} 
\newcommand{\fnSimulate}{\Fn{simulate}} 
\newcommand{\fnPredictNoise}{\Fn{predictNoise}} 
\newcommand{\fnSlidingAverage}{\Fn{getEqNoise}} 
\newcommand{\fnDetectCollision}{\Fn{detectCollision}} 
\newcommand{\Best}{\mathsf{best}} 
\newcommand{\Vdistance}{h} 
\newcommand{\Hdistance}{r} 
\newcommand{\Azimuth}{\varphi} 
\newcommand{\SectorIndex}{m}
\newcommand{\Hypercube}{\mathcal{H}} 
\newcommand{\Vertices}{\mathit{Vert}} 
\newcommand{\FnFoo}{\mathit{f}} 
\newcommand{\FnBar}{\mathit{g}} 
\newcommand{\CompactSet}{\mathcal{K}}
\newcommand{\Dataset}{\mathcal{D}}
\newcommand{\Test}{\textup{test}}
\newcommand{\Error}{\varepsilon} 
\newcommand{\ErrorBound}{\delta} 
\newcommand{\Term}{I} 
\newcommand{\ConstC}{\mathit{C}} 
\newcommand{\ConstD}{\mathit{D}} 
\newcommand{\CostWeight}{\mathit{W}}
\newcommand{\Queue}{\mathcal{Q}} 
\newcommand{\Front}{\mathtt{front}} 
\newcommand{\Enqueue}{\mathtt{enqueue}} 
\newcommand{\Dequeue}{\mathtt{dequeue}} 
\newcommand{\EvtolCt}{\mathtt{M}} 
\newcommand{\NAZCt}{\mathtt{N}} 
\newcommand{\VertiportCt}{\mathtt{V}}
\newcommand{\IndexSet}{\mathcal{I}} 
\newcommand{\SpeedCt}{\mathtt{N}_\Speed} 
\newcommand{\RpmCt}{\mathtt{N}_\Rpm} 
\newcommand{\VdistanceCt}{\mathtt{N}_\Vdistance} 
\newcommand{\HdistanceCt}{\mathtt{N}_\Hdistance} 
\newcommand{\SectorCt}{\mathtt{N}_\Azimuth} 
\newcommand{\IterationCt}{\mathtt{N}_{\textup{iter}}}
\newcommand{\AttemptCt}{\mathtt{N}_{\textup{atmp}}}
\newcommand{\Tree}{\mathcal{T}}
\newcommand{\Cost}{c}
\DeclareMathOperator*{\argmax}{arg\,max}
\DeclareMathOperator*{\argmin}{arg\,min}
\let\Fn\mathtt
\newcommand{\fnSample}{\Fn{sampleRandomState}}
\newcommand{\fnDist}{\Fn{getDist}}
\newcommand{\fnGetCost}{\Fn{getCost}}
\newcommand{\fnExtractPath}{\Fn{extractPath}}
\newcommand{\NN}{\mathcal{N\!\!N}} 
\newcommand{\Schedule}{\mathcal{S}}
\renewcommand*\env@matrix[1][*\c@MaxMatrixCols c]{%
\hskip -\arraycolsep
\let\@ifnextchar\new@ifnextchar
\array{#1}}
\definecolor{azure(colorwheel)}{rgb}{0.0, 0.5, 1.0}
\definecolor{RoyalBlue}{rgb}{0.0, 0.0, 0.5}
\definecolor{ForestGreen}{rgb}{0.13, 0.55, 0.13}
\definecolor{steelblue}{RGB}{70,130,180}
\definecolor{mplblue}{RGB}{31,119,180}
\definecolor{mplorange}{RGB}{255,127,14}
\definecolor{C0}{HTML}{1F77B4} 
\definecolor{C1}{HTML}{FF7F0E} 
\definecolor{C2}{HTML}{2CA02C} 
\definecolor{C3}{HTML}{D62728} 
\definecolor{C4}{HTML}{9467BD} 
\definecolor{C5}{HTML}{8C564B} 
\definecolor{C6}{HTML}{E377C2} 
\definecolor{C7}{HTML}{7F7F7F} 
\definecolor{C8}{HTML}{BCBD22} 
\definecolor{C9}{HTML}{17BECF} 
\definecolor{viridis1}{RGB}{68,1,84}      
\definecolor{viridis2}{RGB}{59,82,139}    
\definecolor{viridis3}{RGB}{33,144,140}   
\definecolor{viridis4}{RGB}{92,200,99}    
\definecolor{viridis5}{RGB}{253,231,37}   
\definecolor{darkgrey176}{RGB}{176,176,176}
\tikzset{pAction/.style={font={\fontsize{7}{0}\selectfont}, RoyalBlue, above, pos=.5}}
\tikzset{pGuard/.style={font={\fontsize{7}{0}\selectfont}, Red}}
\tikzset{pLabel/.style={font={\fontsize{7}{0}\selectfont}, Black!90}}
\tikzset{pDistr/.style={font={\fontsize{7}{0}\selectfont}, Black!90}}
\tikzset{pAssign/.style={font={\fontsize{7}{0}\selectfont}, ForestGreen}}
\tikzset{pArc/.style={-,shorten <=0pt,shorten >=0pt}}
\pgfplotsset{
	every axis/.append style={
		title style={font=\normalsize\bfseries},
		tick style={line width=0.5pt, black!70},
		tick align=inside,
		tick label style={font=\footnotesize},
		legend style={font=\footnotesize},
		axis line style={line width=0.8pt},
		major grid style={gray!30},
  	minor grid style={gray!15},
		colorbar style={
			title style={font=\normalsize\bfseries},
			label style={font=\small},
			tick label style={font=\footnotesize},
			tick style={line width=0.5pt, black},
			line width=1pt,
			tick align=inside
		}
	}
}
\pgfplotsset{
	PlotStyleA/.style={smooth, very thick, C0, mark=*, mark size=1.5, mark options={solid,fill=C0}}, 
	PlotStyleB/.style={smooth, ultra thick, C0, dash pattern=on 1pt off 1pt, mark=none, mark size=1.5, mark options={solid,fill=C0}}, 
	PlotStyleC/.style={smooth, very thick, C1, mark=square*, mark size=1.5, mark options={solid,fill=C1}}, 
	PlotStyleD/.style={smooth, ultra thick, C1, dash pattern=on 1pt off 1pt, mark=none, mark size=1.5, mark options={solid,fill=C1}}, 
	PlotStyleE/.style={smooth, very thick, C2, mark=triangle*, mark size=2.5, mark options={solid,fill=C2}}, 
	PlotStyleF/.style={smooth, ultra thick, C2, dash pattern=on 1pt off 1pt, mark=none, mark size=2.5, mark options={solid,fill=C2}}, 
	PlotStyleG/.style={ultra thick, black!90!white, dash pattern=on 6pt off 2pt on 2pt off 2pt}, 
	PlotStyleH/.style={ultra thick, black!90!white, dash pattern=on 2pt off 2pt}, 
	PlotStyleI/.style={smooth, very thick, C3, mark=*, mark size=2pt, mark options={solid}}, 
	PlotStyleIB/.style={smooth, very thick, C4, mark=*, mark size=2pt, mark options={solid}}, 
	PlotStyleIC/.style={smooth, very thick, C5, mark=*, mark size=2pt, mark options={solid}}, 
	PlotStyleJ/.style={viridis2, ultra thick, mark=*, mark options={fill=viridis2, solid}, mark size=2pt, smooth}, 
	PlotStyleJA/.style={viridis1, ultra thick, mark=*, mark options={fill=viridis1, solid}, mark size=2pt, smooth}, 
	PlotStyleJB/.style={viridis2, ultra thick, mark=*, mark options={fill=viridis2, solid}, mark size=2pt, smooth, dashed}, 
	PlotStyleJC/.style={viridis3, ultra thick, mark=*, mark options={fill=viridis3, solid}, mark size=2pt, smooth}, 
	PlotStyleProjection/.style={color=gray,dashed,line width=1.5pt,opacity=0.7,mark=*,mark size=1.5pt,mark options={fill=gray, opacity=0.5}, smooth},
	PlotStyleDropLines/.style={color=gray!50,line width=1.0pt,opacity=0.5}
}%
\let\DeclareUSUnit\DeclareSIUnit
\DeclareUSUnit\inch{in}
\DeclareUSUnit\foot{ft}
\DeclareUSUnit\mile{mi}
\DeclareUSUnit\feet{ft}
\DeclareSIUnit{\rpm}{rpm}
\DeclareSIUnit{\mps}{m/s}
\DeclareSIUnit{\decibel}{dB}
  \newcommand\num[1]{%
    \ppnum=#1\relax
    \ifnum\ppnum<0
      $-$%
      \ppnum=-\ppnum
    \fi
    \let\pptemp\empty
    \loop\ifnum\ppnum>999
      \count255=\ppnum
      \divide\ppnum by1000
      \count255=\numexpr \count255 - 1000*\ppnum \relax
      \edef\pptemp{,\ifnum\count255<100 0\ifnum\count255<10 0\fi\fi
      \the\count255 \pptemp}%
    \repeat
    \the\ppnum
    \pptemp
  }
\newcommand{\AlignedInterval}[2]{%
  $[\text{\makebox[4.5em][r]{\num[round-mode=places,round-precision=2]{#1}$^\circ$}},
    \text{\makebox[4.5em][r]{\num[round-mode=places,round-precision=2]{#2}$^\circ$}})$%
}
\theoremstyle{plain}
\newtheorem{theorem}{Theorem}
  \newtheorem{proposition}{Proposition}
\newlength{\FigWidth}
\newlength{\FigHeight}
\newlength{\FigXGap}
\newlength{\FigYGap}
\newlength{\YlabelYShift}
\newlength{\TitleYShift}
\newlength{\LineWidth}
\newlength{\MarkSize}
\newcounter{experiment}[section] 
\newcommand{\experiment}[1]{%
  \refstepcounter{experiment}%
  \subsubsection*{Experiment \theexperiment\ --- #1}%
}
	\renewcommand{\todo}[2][]{\leavevmode\tikzexternaldisable\@todo[#1]{#2}\tikzexternalenable}
\setlist[itemize]{leftmargin=14pt} 
\def\BibTeX{{\rm B\kern-.05em{\sc i\kern-.025em b}\kern-.08em
    T\kern-.1667em\lower.7ex\hbox{E}\kern-.125emX}}
\begin{document}

\title{%
Certified Learning-Enabled Noise-Aware Motion Planning for Urban Air Mobility
}
\author{
  Jaejeong Park\textsuperscript{1}\orcidlink{0009-0002-1011-8494}, 
  Mahmoud Elfar\textsuperscript{1}\orcidlink{0000-0002-5579-1255}, 
  Cody Fleming\textsuperscript{2}\orcidlink{0000-0001-6335-471X}, and
  Yasser Shoukry\textsuperscript{1}\orcidlink{0000-0002-8224-8477} \\ 
\textit{\textsuperscript{1}University of California, Irvine}, Irvine, CA, USA \\
\textit{\textsuperscript{2}Iowa State University}, Ames, IA, USA
}

\maketitle
\begin{abstract}


Urban Air Mobility (UAM) has emerged as a promising solution to alleviate urban congestion and transportation challenges.
Nevertheless, the noise generated by eVTOL aircrafts poses a significant barrier to public acceptance and regulatory approval, potentially limiting the operational scope and scalability of UAM systems.
Hence, the successful adoption of UAM systems hinges on the ability to predict generated noise levels, and further develop motion planning strategies that comply with community-level noise regulations while maintaining operational efficiency.
To this end,
this paper proposes a novel noise-aware motion planning framework for UAM systems that ensures compliance with noise regulations.
We first develop a certifiable neural network model to accurately predict eVTOL noise propagation patterns in urban environments, providing provable bounds on its correctness.
To achieve a desired level of accuracy, we propose an active sampling strategy to efficiently build the dataset used to train and test the noise model.
Next, we develop a noise-aware motion planning algorithm that utilizes the noise model to generate eVTOL trajectories that guarantee compliance with community noise regulations.
The algorithm exploits the monotonic structure of the noise model to efficiently sample the configuration space, ensuring that the generated trajectories are both noise-compliant and operationally efficient.
We demonstrate the effectiveness of the proposed framework through a number of experiments for Vahana eVTOLs.
The results show that the framework can generate noise-compliant flight plans for a fleet of eVTOLs that adhere to community noise regulations while optimizing operational efficiency.

\end{abstract}
\begin{IEEEkeywords}
  Motion Planning, Urban Air Mobility, Community Noise, Certified Machine Learning.
\end{IEEEkeywords}

\section{Introduction}
\label{sec:intro}

Urban air mobility (UAM) systems employ electric vertical take-off and landing (eVTOL) aircrafts to provide on-demand transportation of both passengers and goods within urban areas.
In recent years, UAM has gained attention as a promising, transformative technology that could revolutionize urban transportation by offering faster, more efficient, and environmentally friendly travel options.
This can be seen in the growing number of companies investing in eVTOL technology~\cite{dollSouthwestAirlinesSigns2024},
as well as the increasing number of cities exploring the possibility of integrating eVTOLs into their transportation networks~\cite{rimjha2021commuter,qu2024demand}.
Recent Federal Aviation Administration (FAA) draft advisory circulars outline guidelines for the certification of eVTOL aircrafts~\cite{faa2025ac}, further solidifying the potential of UAM as a viable transportation solution. 

A widespread deployment of UAM systems, however, faces several challenges, one of which is the noise pollution generated by eVTOLs.
Despite the use of electric propulsion systems, the noise generated by eVTOLs can have a significant and adverse impact on the quality of life of the general public.
Since UAM systems are expected to operate within and across urban areas to serve a large number of passengers,
airports for vertical take-off and landing, also known as vertiports, are typically located in or near densely populated areas~\cite{garrow2021urban}.
This, and the fact that eVTOLs typically operate at relatively low altitudes, renders the noise generated by eVTOLs a critical concern for the general public,
especially to noise-sensitive locations such as schools, hospitals and retirement homes. 
The adverse impact of UAM noise pollution is compounded for affected communities for which the eVTOL services may not be accessible, raising concerns regarding the equity of UAM deployment.
Consequently, the adoption of noise mitigation strategies is crucial for the successful and equitable deployment of UAM systems, and is expected to be enforced by local and federal regulations.

Nevertheless, noise mitigation in UAM systems comes with its own set of challenges.
The noise generated by eVTOLs is highly nonlinear and dependent on various factors related to
the eVTOL construction (\eg the number of rotors, and their rotational speed and configuration), and
the flight conditions (\eg aircraft altitude, speed and trajectory).
This makes it difficult to accurately predict the noise generated by eVTOLs in real-time,
especially for on-demand UAM services where the flight paths and conditions are constantly changing.
Several studies have approached this challenge using physics-based frameworks that combine detailed aircraft dynamics with aeroacoustic prediction tools. For instance, several studies have approached this challenge using physics-based frameworks that combine detailed aircraft dynamics with aeroacoustic prediction tools. Their framework integrates force-balance kinematic profiles with rotor, airframe, and interaction noise models from NASA’s ANOPP2~\cite{Zorumski1982ANOPP}, enabling assessment of community noise across multiple eVTOL configurations~\cite{yeungFlightProcedureCommunity}~\cite{pelleritoFlightProceduralNoise2024}~\cite{pellerito2024impact}. 
While such physics-based methods can yield highly accurate predictions of acoustic impact under different operational profiles, they are often computationally intensive, which makes them challenging to apply directly in real-time UAM operations. For this reason, there are recent studies employ a supervised U-shaped convolutional neural network with multiple output layers to capture noise propagation in three-dimensional urban environments~\cite{he2025uavnoise}~\cite{kim2024uamn}.

Beyond accurate noise measurement, the large-scale integration of eVTOL operations into urban airspaces requires routing strategies that are both computationally tractable and operationally efficient. 
While UAM-specific noise regulations are still under development and may vary across jurisdictions, they are expected to follow the same notions used in noise ordinances from the FAA and European Union Aviation Safety Agency (EASA) for conventional aircraft operations, including noise abatement zones, noise level limits, and quiet periods~\cite{EASA2021UAM}~\cite{FAA2024_NoisePolicyReview}.
Thus, the noise generated by eVTOLs must be considered during the motion planning phase to ensure compliance with the noise
regulations imposed by the affected communities.
The literature has explored a range of optimization formulations, each tailored to different aspects of UAM networks such as vertiport placement~\cite{wei2024risk}, fairness in resource allocation~\cite{yu2024alpha}~\cite{suIncentiveCompatibleVertiportReservation2024}, and dynamic routing under uncertainty~\cite{wei2023dynamic}.

Finally, recent research has moved beyond pure efficiency and begun to explicitly integrate community noise into trajectory and routing optimization, acknowledging that sustainable UAM operations must balance mobility benefits with urban livability. Representative approaches include a noise-constrained vehicle-routing formulation that couples a virtual flight-noise simulator with a three-phase heuristic to minimize path length and exposure under instantaneous limits~\cite{tan2024lownoise}, and a multilayer, multiobjective optimization that allocates UAM traffic while controlling and fairly distributing community noise via convex–concave procedures~\cite{gao2024noise}.

In this paper, we propose a framework for noise-aware motion planning in UAM systems that conforms to the noise regulations imposed by the affected communities.
Specifically, our contributions are threefold.
First, we develop a neural network (NN) noise model capable of rapidly predicting eVTOL noise levels with provable accuracy bounds.
A high-fidelity dataset of tonal and broadband noise levels under various flight conditions is generated to train the model, ensuring computational efficiency for real-time applications.
To guarantee correctness, we employ monotonic NNs~\cite{nolte2023expressive} to exploit the physical property of acoustic noise of eVTOL.
Moreover, we design an active sampling strategy to efficiently build the dataset used to train and test a noise model so that it is certified to adhere to a given accuracy threshold.

Second, we design a noise-aware motion planning algorithm to generate eVTOL trajectories that guarantee compliance with community noise regulations.
The algorithm modifies the RRT* algorithm by incorporating the community noise ordinances as constraints in the planning process.
It also exploits the monotonic structure of the noise model to efficiently sample the configuration space, improving the computational efficiency of the planner.

Finally, we show the experimental evaluation of the proposed framework through several experiments in simulated UAM scenarios.
We first demonstrate the correctness of the monotonic NN-based noise model that we developed for the Airbus Vahana eVTOL aircraft.
We then evaluate the noise-aware motion planning algorithm in a case study of an on-demand UAM system.
The experiments examine the impact of noise regulation levels and different air traffic density on the generated flight plans.
The results show that the noise-aware motion planner can generate flight paths that comply with community noise regulations while optimizing operational efficiency.

In summary, the main contributions of this paper are as follows:
\begin{itemize}
	\item
	      We develop a monotonic NN-based noise model trained on high-fidelity data to predict eVTOL noise propagation in real time.
	      The model is designed for computational efficiency and is equipped with provable bounds on its accuracy, ensuring reliable noise estimation.
	\item 
            We introduce active sampling method that selectively augments the dataset only when a verifiable criterion indicates that additional measurements will reduce the provable bounds. This procedure tightens the provable accuracy bound of the monotonic NN-based noise model while avoiding the inefficiency of uniform sampling, which even collects unnecessary data in regions that do not contribute to tightening the bound.
	\item
	      We propose a noise-aware motion planning framework that integrates noise constraints into a modified RRT*-based planner.
	      The planner operates in an augmented configuration space, ensuring that generated flight paths comply with community noise regulations while optimizing operational efficiency.
          
        \item
        We demonstrate the effectiveness of the noise-aware motion planner in generating compliant flight paths to noise ordinances through two case studies: (i) a single-eVTOL study that tests the planner under increasingly strict ordinance scenarios, and (ii) a multi-eVTOL study that launches eVTOLs from different vertiports and efficiently mediates cumulative noise by allocating and tracking each eVTOL’s noise budget during trajectory planning.
\end{itemize}

\section{Problem Formulation}
\label{sec:problem}

\subsubsection*{Notation}

We use $\Naturals$, $\Integers$, $\Reals$, $\Reals_{\geq 0}$  to denote the sets of natural, integer, real numbers, and nonnegative real numbers, respectively, and $\varnothing$ to denote the empty set.
For a bounded variable $x$, we use $\underbar{x}$ and $\overbar{x}$ to denote its lower and upper bounds, respectively, \ie $x \in [\underbar{x}, \overbar{x}] \subset \Reals$.
For intervals over integers, we use $[a:b]$ to denote $\{a, a+1, \ldots, b-1, b\}$,
where $a, b \in \Integers$ and $a \leq b$.
For a sequence or a tuple of variables $A$ of size $n$, we use $A[i]$ to denote the $i$-th element of $A$,
where $i \in [1:n]$ and $n \in \Naturals$.
We use $\mathds{1}_{x \in X}$ to denote the indicator function that is 1 if $x \in X$ and 0 otherwise.
Given a tuple $T=(A,B,C)$, we use the dot $.$ notation to denote the elements inside the tuple $T$, i.e., we use $T.A$, $T.B$, and $T.C$ for the $A$, $B$, and $C$ components of $T$.

\subsection{Noise-aware UAM Planning Problem}

Consider an on-demand UAM system that serves a number of communities as shown in~\figref{fig:evtol_example}.
We define the state of an eVTOL as a tuple:
\begin{align}
\State =
  \left(\Velocity, \Rpm, \PosX,
  \PosY, \PosZ, \Heading\right),\;
  \State \in \StateSpace,
  \label{eq:evtol_state}
\end{align}
where:
\begin{itemize}
  \item $\Velocity \in \Reals_{\geq 0}$ is the eVTOL's cruising speed,
  \item $\Rpm \in \Reals_{\geq 0}$ is its rotor speed,
  \item $(\PosX, \PosY, \PosZ) \in \Reals^3$ is its position with respect to a global coordinate system,
  \item $\Heading \in \Reals$ is its heading angle measured from the $x$-axis in the horizontal plane, and
  \item $\StateSpace \subset \Reals^6$ is the eVTOL state space.
\end{itemize}

A centralized \emph{Air Traffic Management System} (ATMS) is responsible for coordinating eVTOL flights to ensure safe and efficient operations.
For an on-demand UAM system, we assume that the ATMS generates flight schedules dynamically based on a first-come, first-serve approach to incoming flight requests rather than following fixed timetables.
Once a flight request is received, the ATMS is responsible for generating a motion plan for the eVTOL--understood here as the sequence of full eVTOL states defined in~\eqref{eq:evtol_state}, not merely a geometric $(x,y,z)$ path--while ensuring safe and efficient navigation and avoiding conflicts with already-approved itineraries.

\begin{definition}[eVTOL Motion Plan]
  \label{def:motion_plan}
  A motion plan for an eVTOL over a time horizon $\Horizon \in \Naturals$
  is a sequence
  \begin{align*}
    \Plan^\Horizon = \State^{(0)} \State^{(1)} \ldots\, \State^{(\Horizon)} \; \subseteq \StateSpace^{\Horizon+1} \;\text{s.t.} \quad \State^{(t+1)} = \MotionModel (\State^{(t)})\,, 
  \end{align*}
  where
  $\State^{(t)}$ denotes the eVTOL state at time $t\in [1:\Horizon-1]$, and 
  $\MotionModel : \StateSpace \to \StateSpace$ represents the eVTOL motion model.
  We denote by $\Planset^\Horizon$ the set of all possible $\Horizon$-length motion plans for the eVTOL.
\end{definition}

To monitor the noise generated by airborne eVTOLs, we define an \emph{observer} as a fixed location at $\Observer = (\PosX_\Observer, \PosY_\Observer, \PosZ_\Observer)$ where noise levels are measured.
We define the state of the eVTOL relative to the observer as a tuple:
\begin{align}
\State_{\Observer} = (\Speed, \Rpm, \Vdistance, \Hdistance, \Azimuth),\;
\State_{\Observer} \in \InputDomain_{\Observer},
\label{eq:relative_state}
\end{align}
where:
\begin{itemize}
	\item $\Vdistance = \PosZ - \PosZ_{\Observer}$ is the vertical distance to the observer,
	\item $\Hdistance = \sqrt{(\PosX - \PosX_{\Observer})^2 + (\PosY - \PosY_{\Observer})^2}$ is the horizontal distance to the observer,
	\item $\Azimuth = \theta - \atantwo\left(\PosY - \PosY_{\Observer},\, \PosX - \PosX_{\Observer}\right)$ is the azimuthal angle from the observer to the eVTOL, and
	\item $\InputDomain_{\Observer} \subset \R^5$ is the state space relative to the observer.
\end{itemize}
A \emph{noise function} $\TrueNoise : \InputDomain_{\Observer} \to \Reals_{\geq 0}$ captures the noise level generated by the eVTOL at observer $\Observer$ as a function of its relative state $\State_{\Observer}$.
Many existing noise ordinances for aircrafts (\eg~\cite{SanDiegoNoiseOrdinance}) regulate noise levels by defining maximum noise thresholds that vary by community.
Specifically, we consider two noise metrics because, together, they provide complementary constraints—protecting against both brief loud and sustained exposure events—commonly required by noise ordinances~\cite{rizzi}~\cite{begault2018uammtrcs}:
\begin{itemize}
  \item \emph{the instantaneous noise level} $\Noise(\State_{\Observer}^{(t)})$ at observer $\Observer$ and time $t$, and
  \item \emph{the equivalent continuous noise level} $\Noise_{\Average}(\State_{\Observer}^{(t)})$ at observer $\Observer$ over a fixed time window $\Window \in \Naturals$:

  \begin{align*}
  \Noise_{\Average}(\State_{\Observer}^{(t)})
= 10 \log_{10}\!\left(
   \frac{1}{\Window} \sum_{j = t - \Window}^{t} 10^{\,\Noise(\State_\Observer^{(j)})/10}
  \right).
  \end{align*}
  
\end{itemize}


We formalize the notion of noise abatement zones as follows.

\begin{definition}[Noise Abatement Zone]
  \label{def:naz}
  A noise abatement zone (NAZ) is a tuple:
	$\NAZ_n = \left(\Observer, \InsNoiseThreshold, \AvgNoiseThreshold, {\Window}, \Ordinance\right)$,
  where:
  \begin{itemize}
		\item $\Observer \in \Reals^3$ is the observer $\Observer$ location at which noise is regulated;
    \item $\InsNoiseThreshold \in \Reals_{\geq 0}$ is the instantaneous noise threshold (in dBA)\footnote{dBA denotes A-weighted sound pressure level, which applies a standardized perceptual weighting; it is the convention for environmental and community noise reporting, unlike unweighted dB.} that must not be exceeded at observer $\Observer$ of $\NAZ$;
    \item $\AvgNoiseThreshold \in \Reals_{\geq 0}$ is the equivalent continuous noise threshold (in dBA) that must not be exceeded at observer $\Observer$ over a fixed time period;
    \item $\Window \in \Naturals$ is the time window over which the average noise $\AvgNoiseThreshold$ is computed; and
		\item $\Ordinance \in \Booleans$
			is a predicate that captures the noise ordinance satisfaction conditions at observer $\Observer$ based on the noise function $\Noise$, and is satisfied at time $t$ if the following condition hold:
            \begin{align}
            \Ordinance (\Noise(\State_{\Observer}^{(t)})) \!=\!
            \begin{cases}
            \True, \; \text{if } \Bigl(\!\Noise(\State_{\Observer}^{(t)}) \leq \InsNoiseThreshold \Bigl) \text{\,and\,} 
            					\Bigl(\!\Noise_{\Average}(\State_{\Observer}^{(t)}) \leq \AvgNoiseThreshold\Bigl),\\
            \False, \; \text{otherwise.}
            \label{eq:ordinance_conditions}
            \end{cases}
            \end{align}
    \end{itemize}

\end{definition}

\begin{figure}[t] 
	\centering
	\pgfplotsset{
		every axis/.append style={
			grid=none,
			title style={yshift=-0.5ex},
			xlabel style={yshift=0.5ex},
			}
		}
	\setlength{\FigWidth}{5in}\setlength{\FigHeight}{2.6in}
	\setlength{\FigXGap}{0.5in}\setlength{\FigYGap}{0.2in}
	\setlength{\YlabelYShift}{-5pt}\setlength{\TitleYShift}{-10pt}
	\setlength{\LineWidth}{1.6pt}\setlength{\MarkSize}{2.6pt}
	
    \includegraphics[width=0.98\columnwidth]{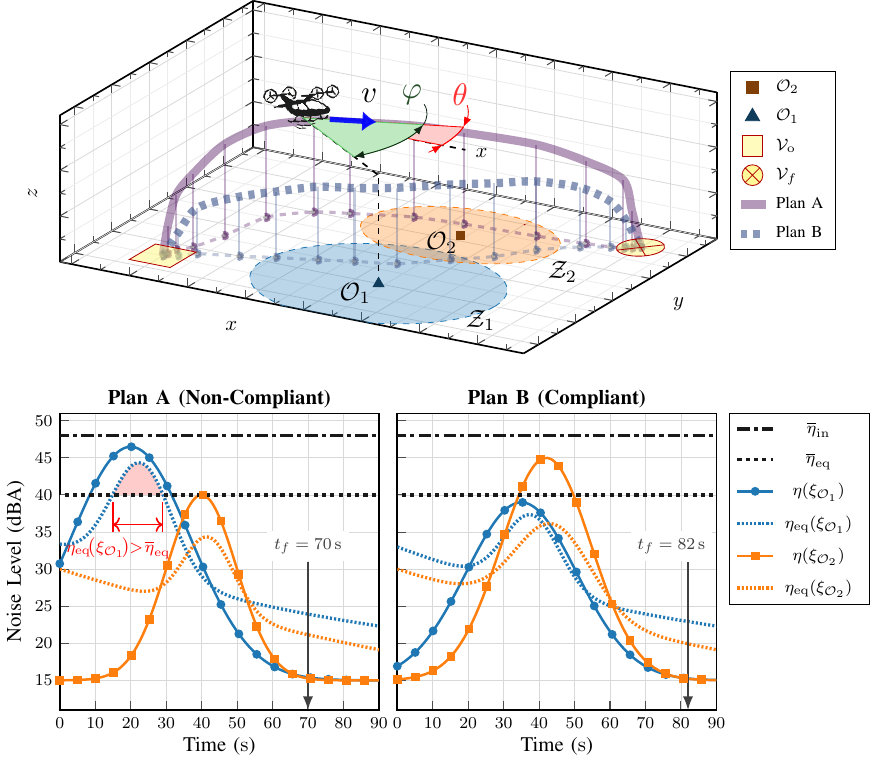}
	\caption{%
		Illustration of an eVTOL at state $\State = (\Speed, \Rpm, \PosX, \PosY, \PosZ, \Heading)$ traveling from vertiport $\Vertiport_{\Start}$ to $\Vertiport_{\Target}$ (top); and
		the instantaneous and equivalent continuous noise levels at observers $\Observer_1$ and $\Observer_2$ for two candidate motion plans A and B (bottom).
		The interval in which the equivalent continuous noise level $\Noise_{\Average}(\State_{\Observer_1})$ exceeds the threshold is highlighted in red.
		The eVTOL's state with respect to $\Observer_1$ is shown in the top panel.
	}
		\label{fig:evtol_example}
\end{figure}

\begin{example}
\figref{fig:evtol_example} 
shows a noise-regulated airspace with two noise-abatement zones: $\NAZ_{1}$ (observer $\Observer_1$) and $\NAZ_{2}$ (observer $\Observer_2$). For this example, $\NAZ_{1}$ and $\NAZ_{2}$ use the same $\AvgNoiseThreshold$ and $\InsNoiseThreshold$ thresholds; in general, thresholds may differ across NAZs. The ATMS plans a flight from $\Vertiport_{\Start}$ to $\Vertiport_{\Target}$. Two candidate plans, A and B, are evaluated.\\
\textbf{Plan A.} It minimizes travel time by flying low through $\NAZ_{1}$, reaching $\Vertiport_{\Target}$ at $t_{\Target}=\SI{70}{\second}$. The equivalent-continuous noise level $\Noise_{\Average}(\State_{\Observer_{1}})$ exceeds its threshold $\AvgNoiseThreshold$, so Plan A is non-compliant at $\NAZ_{1}$. \\
\textbf{Plan B.} It flies higher and slightly longer to limit exposure near $\NAZ_{1}$, arriving at $t_{\Target}=\SI{82}{\second}$. Plan B satisfies the noise ordinance at both NAZs.
\end{example}

Since the noise generated by an eVTOL is a function of its state,
the ATMS must consider the impact of this noise on, and their proximity to, NAZs when generating flight plans.
To generate a motion plan that complies with the noise ordinances specified by NAZs,
the ATMS must ensure that the noise generated by the eVTOL at each observer in a NAZ does not exceed the noise thresholds specified by the NAZ's ordinance,
which in turn requires the ATMS to predict the noise generated by the eVTOL at the NAZ's observer, given its motion plan.
Hence, we formulate the noise-aware UAM motion planning problem as follows.

\begin{problem}[Noise-Aware UAM Planning]
  \label{prob:motion_planning}
  Consider a UAM system that consists of:
  \begin{itemize}
    \item a set of NAZs $\{ \NAZ_1,\ldots, \NAZ_\NAZCt \}$,
    where
		\begin{align*}
      \NAZ_j = \left(\Observer_j, \InsNoiseThreshold_j, \AvgNoiseThreshold_j, \Window_j, \Ordinance_j \right); 
		\end{align*}
    \item a set of vertiports $\{ \Vertiport_1, \ldots, \Vertiport_\VertiportCt \}$; and
    \item a set of eVTOLs $\{ 1, \ldots, \EvtolCt \}$.
  \end{itemize}
  Also, let
  $\Schedule^\Horizon = \{ \Plan_{1}^{\Horizon}, \ldots, \Plan_{\EvtolCt}^{\Horizon} \} \in \prod_{i=0}^{\EvtolCt} \Planset_{i}^\Horizon $
  be the ATMS's current motion plans for all eVTOLs and time horizon $\Horizon$.
  Given a new flight request from
	vertiport $\Vertiport_{\Start}$ at time $t_\Start$ to $\Vertiport_{\Target}$ at time $t_\Target$
	for some eVTOL $ i \in [1:\EvtolCt]$,
  the problem is to find a motion plan $\Plan_{i} \in \Planset_{i}$ for eVTOL ${i}$ such that:
  \begin{itemize}
    \item $\left(\PosX_{i}^{(t_\Start)},\PosY_{i}^{(t_\Start)},\PosZ_{i}^{(t_\Start)}\right) = \Vertiport_{\Start}$;
    \item $\left(\PosX_{i}^{(t_\Target)},\PosY_{i}^{(t_\Target)},\PosZ_{i}^{(t_\Target)}\right) = \Vertiport_{\Target}$; and
    \item $\forall j \in [1:\NAZCt],\,\forall t \in [t_\Start: t_{\Target}]:$
		$
		\Ordinance_{{j}}(\Noise(\State_{\Observer_{j}}^{(t)})) = \True,
		$
		where $\Noise$ is the noise function measured at the state of the eVTOL relative to the observer $\Observer_{j}$.
  \end{itemize}
\end{problem}

The challenge in solving~\probref{prob:motion_planning} is bifold.
First, since the noise measured by an observer is impacted by the noise generated by nearby eVTOLs,
real-time, noise-aware motion planning requires accurate and efficient noise prediction models.
Second, the motion planning algorithm must utilize both the noise prediction and noise ordinance conditions to formulate the noise-based constraints for the motion planning problem.
Moreover, those noise-based constraints are time-dependent,
which adds another layer of complexity to the planning problem.
\section{Certified Learning-Enabled Noise Prediction}
\label{sec:noise}

In this section, we address the first part of~\probref{prob:motion_planning} by focusing on predicting the noise generated by an eVTOL at state $\State$ measured by a fixed ground observer $ \Observer = (\PosX_\Observer, \PosY_\Observer, \PosZ_\Observer) \in \Reals^3$ as shown in~\figref{fig:evtol_example}.

The main challenge stems from the fact that the noise function $\TrueNoise$ is highly nonlinear due to acoustic waves arising from unsteady aerodynamics governed by complex partial differential equations~\cite{salgueiro2024operational}.
Aeroacoustic noise prediction requires solving the full unsteady aerodynamics of the system,
typically done using high-fidelity flight dynamics solvers and acoustic noise prediction tools.
Commonly used flight simulation tools include FUN3D~\cite{fun3d} and FLOWUnsteady~\cite{flow_unsteady}, while widely adopted acoustic noise prediction tools include ANOPP2~\cite{anopp2} and PSU-WOPWOP~\cite{psu_wopwop}.
These tools enable accurate simulation but are computationally expensive. For instance, running a single simulation using FLOWUnsteady for the Airbus Vahana eVTOL model can require between 1--4 days of compute time, depending on fidelity settings.~\footnote{
In our experiments, collecting the simulation dataset required approximately 2 days of wall-clock time on UCI's High Performance Community Computing Cluster (HPC3)~\cite{uci_rcic_hpc3_specs}.}
Subsequent acoustic post-processing using PSU-WOPWOP significantly increases the total computational time, often requiring additional several hours depending on the spatial resolution and extent of the observer grid.
The computational cost of these high-fidelity tools makes them impractical for online planning, yet planners require fast and reliable noise estimates at observer locations. A neural network (NN) model offers the required computing speed and can incorporate physics-motivated structure, but it does not provide guarantees on worst-case error. Such errors may lead to violations of noise ordinances when the eVTOL operates in untested states. We therefore pose a certified learning problem: to learn an NN-based noise model over a bounded operating domain while providing an explicit bound on its deviation from the reference noise function. The formulation below states this problem precisely.

To that end, we introduce a principled framework for training a neural network model for
accurate and efficient noise prediction.
Our framework exploits the physical properties of sound propagation to derive a physics-informed NN architecture
and a training procedure optimized for modeling the noise levels generated by an eVTOL aircraft at a fixed observer location.
Unlike traditional machine learning approaches that rely solely on data-driven techniques to train a black-box model,
we leverage the underlying physics of noise generation to guide the selection of the NN architecture, activation functions, and training data sampling strategy,
allowing us to derive a certified worst-case bound on the prediction error for the trained model.
\subsection{Certified Learning-Enabled Noise Prediction Problem}

To enable real-time and reliable noise-aware motion planning, we propose a reduced-order approximation of $\TrueNoise$
using a NN-based noise model $\NN^{\textup{noise}}$.
This approximation allows for efficient noise level estimation across varying flight conditions.
However, NNs do not inherently provide guarantees on the accuracy of their predictions.
It is therefore essential to provide certifiable guarantees that the noise levels predicted by the NN are within a user-specified error tolerance.
To this end, we define the first subproblem as follows.

\begin{problem}[Certified Learning-Enabled Noise Modeling]
  \label{prob:noisePrediction}
	Given a bounded input domain $\InputDomain_\Observer$ for observer $\Observer$, and
  a black box noise function
	$\TrueNoise \colon \InputDomain_\Observer \rightarrow \Reals_{\geq 0}$ for an eVTOL,
	train a NN-based noise model
	$\NN^{\textup{noise}}\colon \InputDomain_\Observer \rightarrow \Reals_{\geq 0}$ such that:
  \begin{align}
    \max_{ \State_\Observer \in \InputDomain_\Observer}
    | \TrueNoise(\State_\Observer) - \NN^{\textup{noise}}(\State_\Observer) | \leq \ErrorBound, \nonumber
  \end{align}
  where $\State_\Observer \in \InputDomain_\Observer$ is the eVTOL state relative to the location at observer $\Observer$, and
	$\ErrorBound > 0$ is a user-specified maximum error tolerance.
\end{problem}

\begin{remark}
\label{remark:certified_unseen}
In machine learning, conventional test and validation techniques evaluate a model's performance on a specific subset of data, providing an empirical estimate of error. However, these estimates are limited to the particular data samples used and do not guarantee performance on unseen data. In contrast, the certified learning  in Problem~\ref{prob:noisePrediction} aims to provide a certified error bound, which is a provable worst-case upper bound on the NN-based noise model's error that holds true across the entire input domain $\InputDomain$, not just the training or validation sets. This distinction is crucial because it offers a formal guarantee of robustness, ensuring that even for inputs not encountered during training, the NN-based noise model's error will not exceed a predefined threshold, thereby addressing the limitations of empirical evaluation and hence ensures adherence to noise regulations.
\end{remark}

\subsection{Physics-Informed NN Design}

We begin by formalizing the fundamental physical properties of the noise field of an eVTOL, also referred to as the source, and its impact on an observer positioned at a distance from the aircraft. 
In aeroacoustics, it is well established that overall sound pressure level (OASPL) increases monotonically with both the mechanical energy input of the source (\ie the eVTOL) and its proximity to the observer~\cite{rizzi}.
Conversely, OASPL decreases as the vertical or horizontal distance from the source increases due to geometric spreading and atmospheric attenuation~\cite{hur2023farfield}.
Hence, a proper noise model should reflect these monotonic relationships.

To this end,
we define a monotonicity property for $\TrueNoise$, which constrains its partial derivatives with respect to the eVTOL and observer parameters.

\begin{assumption}[Noise Monotonicity]
\label{assump:noise_monotonicity}
The noise function $\TrueNoise(\Speed, \Rpm, \Vdistance, \Hdistance, \Azimuth)$ satisfies the following constraints:
\begin{align*}
	\frac{\partial \TrueNoise}{\partial \Speed} \ge 0, &\quad
	\frac{\partial \TrueNoise}{\partial \Rpm} \ge 0, \quad
	\text{(source energy monotonicity)} \\
	\frac{\partial \TrueNoise}{\partial \Vdistance} \le 0, &\quad
	\frac{\partial \TrueNoise}{\partial \Hdistance} \le 0. \quad
	\text{(observer proximity monotonicity)}
\end{align*}
\end{assumption}
As stated in~\assumpref{assump:noise_monotonicity}, unlike $v,\rho,h,$ and $r$, the azimuthal angle $\varphi$ does not admit a monotonic relationship with the noise level. Accordingly, we first examine how variations in $\varphi$ affect noise before training the noise model.
Noise sources, such as eVTOLs, emit acoustic energy unevenly in space,
forming directional patterns that are known as \emph{lobes}~\cite{olaveson2022spatiospectral}.
Their exact shapes depend on the structure and motion of the source,
\eg rotating blades or distributed propellers.
Within each lobe, OASPL gradually varies smoothly as a function of the direction of observation (\ie the azimuthal angle $\Azimuth$).
This gradual variation arises from the interference and propagation of acoustic waves in the far field, and has been consistently observed across a range of acoustic systems.
Consequently, the noise level can be modeled as a continuous function of the azimuthal angle, $\Azimuth$, within any sufficiently small angular sector.
This property supports the use of a bounded, direction-dependent noise model over $\Azimuth$.

\begin{assumption}[Bounded Azimuthal Variation Property]\label{assump:angular_monotonicity_noise}
Given an eVTOL as a noise source, and an observer $\Observer$,
the \emph{bounded variation property} of the noise function
$\TrueNoise(\Speed,\Rpm,\Vdistance,\Hdistance,\Azimuth)$,
w.r.t. the azimuthal angle $\Azimuth$ and tolerance $\BoundedVarAzi>0$, is defined as follows:
\begin{align*}
\forall \Azimuth \in &[-\underbar{\Azimuth}, \overbar{\Azimuth}),\; \exists \Azimuth_{\text{crit}} > 0 \;
\;\text{s.t.}\; \forall \Azimuth' \in [-\Azimuth_{\text{crit}}, \Azimuth_{\text{crit}}),\;\\
&\qquad\qquad\big|\TrueNoise(\State_\Observer \mid \Azimuth + \Azimuth') -\TrueNoise(\State_\Observer \mid \Azimuth)\big|\leq \BoundedVarAzi~\textup{dB}.
\end{align*}
\end{assumption}

\begin{figure}[!t]
    \centering
    \includegraphics[width=0.95\columnwidth]{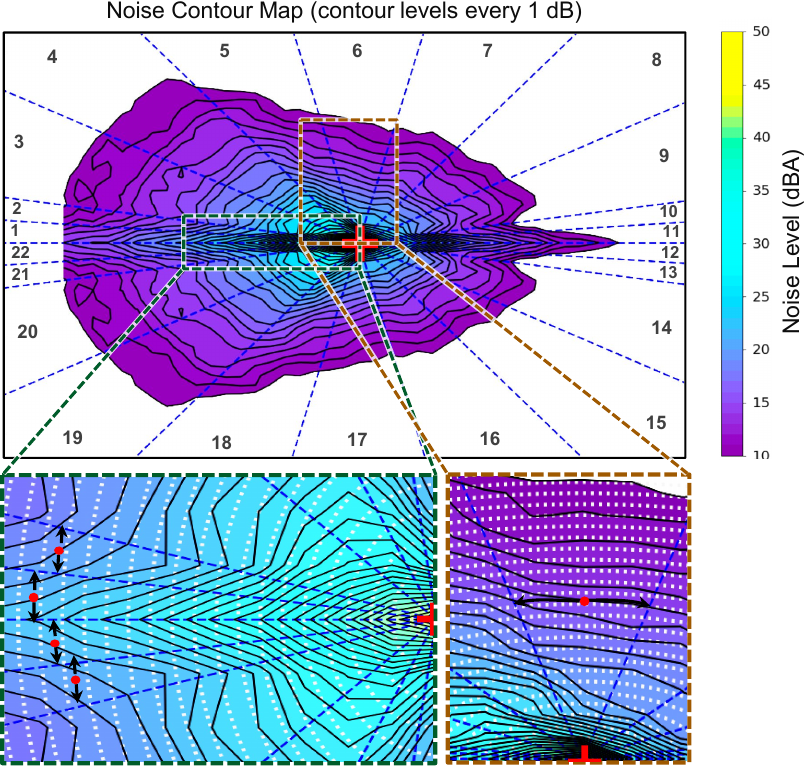}
    \caption{Segmentation of the azimuthal domain into $22$ angular sectors based on the bounded azimuthal variation property ($\BoundedVarAzi = 1$ dB).
		The noise contour map is generated at flight conditions
		$\Speed = \SI{60}{\mps}$, $\Rpm = \SI{700}{\rpm}$, $\Vdistance = \SI{50}{\meter}$.
		}%
    \label{fig:region_subdivision}
\end{figure}

\assumpref{assump:angular_monotonicity_noise} implies that
for any fixed flight conditions $\Speed, \Rpm, \Vdistance, \Hdistance$,
there exists an interval around each azimuthal angle $\Azimuth$ within which
the noise function $\TrueNoise(\Speed, \Rpm, \Vdistance, \Hdistance, \Azimuth)$
deviates by no more than $\BoundedVarAzi$~dB from its value at $\Azimuth$.
That is, the noise levels observed at azimuthal angles within this interval differ by no more than $\BoundedVarAzi$ dB. While $\BoundedVarAzi$ can be chosen freely as a design parameter, in this work we deliberately set $\BoundedVarAzi = 1$ dB, since 1 dB is commonly regarded as the threshold of human hearing perception~\cite{bradley1999just},
and are therefore practically indistinguishable.
This bounded variation property motivates the partition of the full azimuthal domain
into a sequence of angular sectors as follows.
Starting from some initial angle $\underbar{\Azimuth}_1 \in [\underbar{\Azimuth}, \overbar{\Azimuth}]$,
the noise function
$\TrueNoise(\overbar{\Speed}, \overbar{\Rpm}, \underbar{\Vdistance}, \Hdistance,
\underbar{\Azimuth}_1 + \Delta\Azimuth)$ is evaluated at each small angular increment of $\Delta\Azimuth$, until:
\begin{align}
\label{eq:angular_partitioning_condition}
	\big|\TrueNoise(\overbar{\Speed}, \overbar{\Rpm}, \underbar{\Vdistance}, \Hdistance, \underbar{\Azimuth}_1 + \Delta\Azimuth) - \TrueNoise(\overbar{\Speed}, \overbar{\Rpm}, \underbar{\Vdistance}, \Hdistance, \underbar{\Azimuth}_1)\big| > \BoundedVarAzi~\text{dB},
\end{align}
defining the end of the first angular sector
  $[\underbar{\Azimuth}_1, \underbar{\Azimuth}_1 + \Delta\Azimuth)$,
and the start of the next sector
  $\underbar{\Azimuth}_2 = \underbar{\Azimuth}_1 + \Delta\Azimuth$.
By repeating this partitioning process, the azimuthal domain is partitioned into
a finite set of non-overlapping angular sectors:
\begin{align*}
	\left\{ \Azimuth_{\SectorIndex} \,\bigg|\,
		\Azimuth_{\SectorIndex} = \left[\underbar{\Azimuth}_{\SectorIndex}, \overbar{\Azimuth}_{\SectorIndex}\right), \;
		\bigcup_{\SectorIndex = 1}^{\SectorCt} \Azimuth_{\SectorIndex} = [\underbar{\Azimuth}, \overbar{\Azimuth}), \;
		\bigcap_{\SectorIndex = 1}^{\SectorCt} \Azimuth_{\SectorIndex} = \emptyset		
		\right\},
\end{align*}%
where $\SectorCt$ denotes the total number angular sectors that partition the azimuthal domain.
Note that the noise levels in~\eqref{eq:angular_partitioning_condition} are evaluated under the worst-case flight conditions, \ie $\overbar{\Speed}$, $\overbar{\Rpm}$, and $\underbar{\Vdistance}$, to guarantee the validity of the azimuthal decomposition under all flight conditions.
The result of applying this process to the data collected from the noise simulation
is summarized in~\tabref{tab:angular_partitions},
and the resulting angular sectors $\{\Azimuth_{1}, \ldots, \Azimuth_{{22}}\}$ are visualized in~\figref{fig:region_subdivision}.
Further details of the azimuthal partitioning process are provided later in~\secref{sec:experiments}. From this process, each noise function is now fully monotonic in each partition.

\begin{table}[t]
\centering
\caption{Azimuthal partitioning based on bounded noise variation. \\($\BoundedVarAzi$ = 1 \textup{dB})}
\label{tab:angular_partitions}
\resizebox{0.85\columnwidth}{!}{%
\begin{tabular}{cc|cc}
\toprule
$\SectorIndex$ & Angular Range $\left[\underbar{\Azimuth}_{\SectorIndex}, \overbar{\Azimuth}_{\SectorIndex}\right)$ &
$\SectorIndex$ & Angular Range $\left[\underbar{\Azimuth}_{\SectorIndex}, \overbar{\Azimuth}_{\SectorIndex}\right)$ \\
\midrule
1  & $ \AlignedInterval{000.00}{005.62} $    & 22  & $\AlignedInterval{-5.625}{0.00} $  \\
2  & $ \AlignedInterval{005.62}{011.25} $    & 21  & $\AlignedInterval{-11.25}{-5.625} $  \\
3  & $ \AlignedInterval{011.25}{033.75} $    & 20  & $\AlignedInterval{-33.75}{-11.25} $  \\
4  & $ \AlignedInterval{033.75}{056.25} $    & 19  & $\AlignedInterval{-56.25}{-33.75} $  \\
5  & $ \AlignedInterval{056.25}{078.75} $    & 18  & $\AlignedInterval{-78.75}{-56.25} $  \\
6  & $ \AlignedInterval{078.75}{101.25} $    & 17  & $\AlignedInterval{-101.25}{-78.75} $  \\
7  & $ \AlignedInterval{101.25}{123.75} $    & 16  & $\AlignedInterval{-123.75}{-101.25} $  \\
8  & $ \AlignedInterval{123.75}{146.25} $    & 15  & $\AlignedInterval{-146.25}{-123.75} $  \\
9  & $ \AlignedInterval{146.25}{168.75} $    & 14  & $\AlignedInterval{-168.75}{-146.25} $  \\
10 & $ \AlignedInterval{168.75}{174.375} $   & 13  & $\AlignedInterval{-174.375}{-168.75} $  \\
11 & $ \AlignedInterval{174.375}{-180.00} $  & 12  & $\AlignedInterval{-180.00}{-174.375} $  \\
\bottomrule
\end{tabular}%
}
\end{table}

Given the azimuthal partitioning $\{\Azimuth_{1}, \ldots, \Azimuth_{{\SectorCt}}\}$,
we denote by $\InputDomain_{\Observer,\SectorIndex}$ the input domain of the $\SectorIndex$-th azimuthal partition, defined as:
\begin{align}
	\InputDomain_{\Observer,\SectorIndex} &=
	\left\{ \State_{\Observer} \in \InputDomain_{\Observer} \,\mid\, 
	\Azimuth \in \left[\underbar{\Azimuth}_{\SectorIndex}, \overbar{\Azimuth}_{\SectorIndex}\right) \right\}, \quad m \in\{1, \ldots, \SectorCt\}. \label{eq:region_inputspace}
\end{align}

Leveraging~\assumpref{assump:noise_monotonicity} and~\assumpref{assump:angular_monotonicity_noise}, we conclude that the noise function $\TrueNoise$ is monotonic in each azimuthal partition $\InputDomain_{\Observer,\SectorIndex}$. 
Recalling Problem~\ref{prob:noisePrediction}, our goal is to learn a NN model whose
absolute approximation error admits a uniform bound on the operating set. The monotonicity of $\eta$ in each azimuthal partition $\InputDomain_{\Observer,\SectorIndex}$ suggest that one should adopt \emph{monotonic neural network} architectures~\cite{nolte2023expressive}:

\begin{definition}[Monotonic Neural Network]
\label{def:monotonic_neural_network}
A $\NN$ is {monotonically increasing} (resp. decreasing) {with respect to input $x_i$} if: 
\[
\frac{\partial \NN}{\partial x_i} \ge 0 \quad (\text{resp. } \frac{\partial \NN}{\partial x_i} \le 0).
\]
The network is {monotonic} if it is {monotonically increasing} (resp. decreasing) in each input $x_i$.

\end{definition}

Therefore, we reach our final design of the NN as a set of \emph{monotonic} NN submodels (one for each azimuthal partition),
yielding the monotonic NN-based noise model:
\begin{align}
\NN^{\textup{noise}}(\State_{\Observer}) &= \sum_{\SectorIndex=1}^{\SectorCt} \NN_{\!\SectorIndex}^{\textup{noise}}(\State_{\Observer} \mid \tilde{\Azimuth}_{\SectorIndex}) \cdot \mathds{1}_{\{\Azimuth \in \Azimuth_{\SectorIndex}\}}, \label{eq:neural_networks_by_regions}
\end{align}
where $\NN_{\!\SectorIndex}^{\textup{noise}}$ is the $\SectorIndex$-th sector-specific
monotonic NN-based modeling the noise within the azimuthal partition.

\subsection{Formal Guarantees on Monotonic NN-based Noise Model} 
\label{subsec:Formal_Guarantee}
In this section, we establish formal guarantees on the accuracy of the monotonic NN-based noise model $\NN^{\textup{noise}}$. Our objective is to generalize the \emph{empirical error} calculated after NN training into \emph{worst case error} bound that is guaranteed to hold across the entire input domain, even for data points that were not part of the NN training/testing.

To derive such worst case bound, we resort to the monotonicity assumptions (Assumption~\ref{assump:noise_monotonicity} and Assumption~\ref{assump:angular_monotonicity_noise}) as follows. First, consider any two data points  $\State_{\Observer}' = (v',\rho',h',r',\varphi')$ and $\State_{\Observer}'' = (v'',\rho'',h'',r'',\varphi'')$ randomly sampled from the same $\InputDomain_{\Observer,\SectorIndex}$. Assume that these two data points $\State_{\Observer}', \State_{\Observer}''$ respect the monotonicity of $\eta$, i.e.:
\begin{align}
    v' \ge v'', \; \rho' \ge \rho'', \; h' \le h'', \; r', \le r''.
\end{align}
Notice that these two points $\State_{\Observer}'$ and $\State_{\Observer}''$ can be seen as two antipodal vertices of a hypercube $\Hypercube_{\SectorIndex}$ defined as: $$\Hypercube_{\SectorIndex} = [v'', v']\times[\rho'',\rho']\times[h',h'']\times[r',r'']\times[\varphi', \varphi''].$$
Thanks to Assumptions~\ref{assump:noise_monotonicity} and~\ref{assump:angular_monotonicity_noise}, the following holds:
$$\forall \State_\Observer \in \Hypercube_{\SectorIndex} : \qquad \eta(\State_{\Observer}'') - \BoundedVarAzi \le \eta(\State_{\Observer}) \le \eta(\State_{\Observer}') + \BoundedVarAzi.$$
In other words, data points collected in a way that respects the noise monotonicity leads to a sampling lattice where the noise function within this hypercube is bounded by the values at the two antipodal vertices. Hence, without loss of generality, we assume the test dataset $\Dataset_{\Test}$ was collected from such a sampling lattice, i.e.: 

\begin{align}
	\Dataset_{\Test}
	= \bigcup_{\SectorIndex=1}^{\SectorCt} \Dataset_{\Test,\SectorIndex}
	= \bigcup_{i=1,j=1,k=1,\ell=1,\SectorIndex=1}^{\SpeedCt-1, \RpmCt-1, \VdistanceCt-1, \HdistanceCt-1, \SectorCt}
	\left\{\overbar{\State}_{\Observer,\SectorIndex}^{i j k \ell}, \uunderbar{\State}_{\Observer,\SectorIndex}^{i j k \ell} \right\},\label{eq:test_dataset}
\end{align}
where:
\begin{align}
\overbar{\State}_{\Observer,\SectorIndex}^{i j k \ell}
	&= (\Speed_{i+1}, \Rpm_{j+1}, \Vdistance_k, \Hdistance_l, \tilde{\Azimuth}_{\SectorIndex}), \nonumber \\
\uunderbar{\State}_{\Observer,\SectorIndex}^{i j k \ell} 
	&= (\Speed_i, \Rpm_j, \Vdistance_{k+1}, \Hdistance_{l+1}, \tilde{\Azimuth}_{\SectorIndex}),\notag
	\label{eq:hypercube_max_min}
\end{align} 
with $v_{i+1} \ge v_i, \rho_{j+1} \ge \rho_j, h_{k+1} \le h_k, r_{l+1} \le r_l$ and $(i,j,k,l) \in \IndexSet$ where 
$\IndexSet = [1 \!:\! \SpeedCt] \!\times\! [1 \!:\! \RpmCt] \!\times\! [1 \!:\! \VdistanceCt] \!\times\! [1 \!:\! \HdistanceCt] $ is the index set, and
$\SpeedCt$, $\RpmCt$, $\VdistanceCt$, and $\HdistanceCt$ are the number of grid points in the dimensions $\Speed$, $\Rpm$, $\Vdistance$, and $\Hdistance$, respectively. Such test dataset gives ruse to the following set of hypercubes:
\begin{align*}
	\Hypercube_{\SectorIndex}^{i j k \ell} = 
		\left[\Speed_i, \Speed_{i+1}\right] \!\times\!
		\left[\Rpm_j, \Rpm_{j+1}\right] \!\times\!
		\left[\Vdistance_k, \Vdistance_{k+1}\right] \!\times\!
		\left[\Hdistance_\ell, \Hdistance_{\ell+1}\right] \!\times\!
		\{ \tilde{\Azimuth}_{\SectorIndex} \}.
\end{align*}

Given the structure of the trained monotonic NN-based noise model $\NN^{\textup{noise}}$ in \ref{eq:neural_networks_by_regions} and the definition of $\Dataset_{\Test}$ above, we state our main theoretical result as follows.

\begin{theorem} \label{thm:error_bound}
Given a trained
monotonic NN-based noise model $\NN^{\textup{noise}}$ and a bounded input domain $\InputDomain$, let $\Dataset_{\Test}$ be as defined in~\eqref{eq:test_dataset}.
Then under the Noise Monotonicity \assumpref{assump:noise_monotonicity}, the following bound on the prediction error holds:
\begin{align*}
	\max_{\State_{\Observer} \in \InputDomain_{\Observer}} |
		\TrueNoise\left(\State_{\Observer}\right)
		- \NN^{\textup{noise}}\left(\State_{\Observer}\right)
	|	\leq \ErrorBound
\end{align*}
where the error bound $\ErrorBound$ is defined as:%
\begin{align*}
	\ErrorBound &=
		\max_{\SectorIndex \in [1:\SectorCt]}\;\ErrorBound_{m}\\
    &=
		\max_{\SectorIndex \in [1:\SectorCt]}\;
	\max_{\State_{\Observer} \in \Dataset_{\Test,\SectorIndex}} \;
	\left\{ \begin{aligned}
	1 & +
	\underbrace{\big| \TrueNoise(\State_{\Observer}) - \ConstC_{\SectorIndex} \big|}_{\Term_1}
	\\
	& 
	+ \underbrace{ \big| \NN_{\!\SectorIndex}^{\textup{noise}}(\State_{\Observer}) - \ConstD_{\SectorIndex} \big| }_{\Term_2}\\
	& + \underbrace{\big| \ConstC_{\SectorIndex} - \ConstD_{\SectorIndex} \big|}_{\Term_3}
	\end{aligned} \right\},
\end{align*}
\begin{align*}
	\text{and } \quad
	\ConstC_{\SectorIndex} &= \frac{\TrueNoise(\overbar{\State}_{\Observer,\SectorIndex}^{i j k \ell}) + \TrueNoise(\uunderbar{\State}_{\Observer,\SectorIndex}^{i j k \ell})}{2},\\
	\ConstD_{\SectorIndex} &= \frac{\NN^{\textup{noise}}_{\SectorIndex}(\overbar{\State}_{\Observer,\SectorIndex}^{i j k \ell}) +\NN^{\textup{noise}}_{\SectorIndex}(\uunderbar{\State}_{\Observer,\SectorIndex}^{i j k \ell})}{2}.
\end{align*}
\end{theorem}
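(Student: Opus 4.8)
The plan is to convert the global statement---a maximum of $|\TrueNoise-\NN^{\textup{noise}}|$ over the continuous set $\InputDomain_\Observer$---into a finite maximum over the antipodal test vertices, by sandwiching \emph{both} $\TrueNoise$ and the network output between their values at the two antipodal corners of a grid cell. First I would reduce to one sector: since $\InputDomain_\Observer=\bigcup_{\SectorIndex}\InputDomain_{\Observer,\SectorIndex}$, since $\NN^{\textup{noise}}$ restricted to $\InputDomain_{\Observer,\SectorIndex}$ coincides with the submodel $\NN_{\!\SectorIndex}^{\textup{noise}}$, and since $\ErrorBound=\max_{\SectorIndex}\ErrorBound_{\SectorIndex}$, it suffices to fix $\SectorIndex$ and an arbitrary $\State_\Observer=(\Speed,\Rpm,\Vdistance,\Hdistance,\Azimuth)\in\InputDomain_{\Observer,\SectorIndex}$ and show $|\TrueNoise(\State_\Observer)-\NN_{\!\SectorIndex}^{\textup{noise}}(\State_\Observer)|\le\ErrorBound_{\SectorIndex}$. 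The projection $(\Speed,\Rpm,\Vdistance,\Hdistance)$ falls in some grid cell, so there exist indices $(i,j,k,\ell)$ with $\State_\Observer$ lying in $\Hypercube_{\SectorIndex}^{ijk\ell}$ except possibly for its azimuth coordinate; I abbreviate the two antipodal corners as $\overbar{\State}:=\overbar{\State}_{\Observer,\SectorIndex}^{ijk\ell}$ and $\uunderbar{\State}:=\uunderbar{\State}_{\Observer,\SectorIndex}^{ijk\ell}$, both elements of $\Dataset_{\Test,\SectorIndex}$.

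Next I would build the two sandwiches. On the true-noise side, shifting the azimuth of $\State_\Observer$ to the sector's representative angle $\tilde{\Azimuth}_{\SectorIndex}$ changes $\TrueNoise$ by at most $\BoundedVarAzi=1$ dB by \assumpref{assump:angular_monotonicity_noise} (this is exactly the cut rule used to form the sectors), and then, with azimuth held at $\tilde{\Azimuth}_{\SectorIndex}$, \assumpref{assump:noise_monotonicity} --- $\TrueNoise$ nondecreasing in $\Speed,\Rpm$ and nonincreasing in $\Vdistance,\Hdistance$ --- forces $\TrueNoise(\uunderbar{\State})\le\TrueNoise(\Speed,\Rpm,\Vdistance,\Hdistance,\tilde{\Azimuth}_{\SectorIndex})\le\TrueNoise(\overbar{\State})$, whence $\TrueNoise(\uunderbar{\State})-1\le\TrueNoise(\State_\Observer)\le\TrueNoise(\overbar{\State})+1$. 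On the network side, $\NN_{\!\SectorIndex}^{\textup{noise}}$ depends on a point of $\InputDomain_{\Observer,\SectorIndex}$ only through $(\Speed,\Rpm,\Vdistance,\Hdistance)$ and is monotone in the matching directions by \defref{def:monotonic_neural_network}, so $\NN_{\!\SectorIndex}^{\textup{noise}}(\uunderbar{\State})\le\NN_{\!\SectorIndex}^{\textup{noise}}(\State_\Observer)\le\NN_{\!\SectorIndex}^{\textup{noise}}(\overbar{\State})$ with no azimuthal slack. Subtracting the two chains gives the cell-wise bound
\[
|\TrueNoise(\State_\Observer)-\NN_{\!\SectorIndex}^{\textup{noise}}(\State_\Observer)|\;\le\;1+\max\!\bigl(\,\TrueNoise(\overbar{\State})-\NN_{\!\SectorIndex}^{\textup{noise}}(\uunderbar{\State}),\;\NN_{\!\SectorIndex}^{\textup{noise}}(\overbar{\State})-\TrueNoise(\uunderbar{\State})\,\bigr).
\]

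The crux is to recast this $\max$ into the centered three-term form of the statement. Writing $A=\TrueNoise(\overbar{\State})$, $a=\TrueNoise(\uunderbar{\State})$, $B=\NN_{\!\SectorIndex}^{\textup{noise}}(\overbar{\State})$, $b=\NN_{\!\SectorIndex}^{\textup{noise}}(\uunderbar{\State})$ (so $A\ge a$ and $B\ge b$ by monotonicity), and using the midpoints $\ConstC_{\SectorIndex}=(A+a)/2$, $\ConstD_{\SectorIndex}=(B+b)/2$ from the theorem, one has the exact identities $A-b=\tfrac12(A-a)+\tfrac12(B-b)+(\ConstC_{\SectorIndex}-\ConstD_{\SectorIndex})$ and $B-a=\tfrac12(A-a)+\tfrac12(B-b)+(\ConstD_{\SectorIndex}-\ConstC_{\SectorIndex})$, so both entries of the $\max$ are at most $\tfrac12|A-a|+\tfrac12|B-b|+|\ConstC_{\SectorIndex}-\ConstD_{\SectorIndex}|$. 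Now $\tfrac12|A-a|=|\TrueNoise(\overbar{\State})-\ConstC_{\SectorIndex}|=|\TrueNoise(\uunderbar{\State})-\ConstC_{\SectorIndex}|=\Term_1$, likewise $\tfrac12|B-b|=\Term_2$, and $|\ConstC_{\SectorIndex}-\ConstD_{\SectorIndex}|=\Term_3$, so the cell-wise bound reads $|\TrueNoise(\State_\Observer)-\NN_{\!\SectorIndex}^{\textup{noise}}(\State_\Observer)|\le 1+\Term_1+\Term_2+\Term_3$ evaluated at $\overbar{\State},\uunderbar{\State}\in\Dataset_{\Test,\SectorIndex}$, which is by definition $\le\ErrorBound_{\SectorIndex}$. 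Taking the maximum over $\SectorIndex$ and over $\State_\Observer\in\InputDomain_\Observer$ closes the argument.

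I expect the main obstacle to be the bookkeeping that makes the two sandwiches compatible, rather than any single inequality: I must (i) align the monotone directions of $\NN_{\!\SectorIndex}^{\textup{noise}}$ with those of $\TrueNoise$ in \assumpref{assump:noise_monotonicity}; (ii) charge the $\BoundedVarAzi$ slack on the true-noise side only, since the submodel is already azimuth-free within a sector; and (iii) verify that a generic $\State_\Observer$ genuinely sits inside a cell whose antipodal corners are test points. Once the bound $1+\max(A-b,\,B-a)$ is in hand, the two re-centering identities are the key algebraic step, and they are precisely what makes the final bound a \emph{computable} function of the test data --- and, in turn, what the active sampling strategy of the next subsection is designed to shrink.
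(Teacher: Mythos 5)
Your proposal is correct, but it reaches the bound by a genuinely different route than the paper. The paper first inserts the hypercube-specific constants $\ConstC_{\SectorIndex},\ConstD_{\SectorIndex}$ by adding and subtracting them inside the absolute value, applies the triangle inequality to obtain the three-term form over the continuous sector domain, peels off the azimuthal deviation (bounded by $\BoundedVarAzi=1$~dB via \assumpref{assump:angular_monotonicity_noise}), and only then reduces the continuous maximization to the test points: it proves three auxiliary propositions (monotonicity is preserved under subtraction of a constant, the argmax is unchanged, and the maximum of the absolute value of a monotone function over a compact hypercube cover is attained at hypercube vertices) and invokes them to push each term to the vertices, i.e.\ to $\Dataset_{\Test,\SectorIndex}$. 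You instead work cell by cell from the start: you enclose both $\TrueNoise$ and $\NN_{\!\SectorIndex}^{\textup{noise}}$ between their values at the two antipodal test corners (charging the $1$~dB azimuthal slack only on the true-noise side, which matches the paper since the submodel is evaluated at $\tilde{\Azimuth}_{\SectorIndex}$), obtain the tighter intermediate bound $1+\max\bigl(A-b,\,B-a\bigr)$, and then relax it to the stated three-term form via the exact midpoint identities $A-b=\tfrac12(A-a)+\tfrac12(B-b)+(\ConstC_{\SectorIndex}-\ConstD_{\SectorIndex})$ and $B-a=\tfrac12(A-a)+\tfrac12(B-b)+(\ConstD_{\SectorIndex}-\ConstC_{\SectorIndex})$, together with $\tfrac12|A-a|=\Term_1$, $\tfrac12|B-b|=\Term_2$. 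Both arguments rest on the same two pillars (monotonicity confines the functions within a cell to their corner values; the sector construction caps azimuthal variation at $1$~dB), but yours dispenses with the paper's Propositions~1--3 entirely, handles the hypercube-specific nature of $\ConstC_{\SectorIndex},\ConstD_{\SectorIndex}$ more explicitly, and makes visible that the theorem's bound is a deliberate relaxation of a sharper corner-difference bound; the paper's route, in exchange, states reusable lemmas about monotone functions and keeps the derivation in the global ``triangle-inequality then vertex-reduction'' style. The caveats you flag (alignment of the network's monotone directions with those of \assumpref{assump:noise_monotonicity}, and coverage of the sector by the test grid) are implicit in the paper's proof as well, so they do not constitute a gap relative to it; likewise, the roles of $\overbar{\State}_{\Observer,\SectorIndex}^{ijk\ell}$ and $\uunderbar{\State}_{\Observer,\SectorIndex}^{ijk\ell}$ as the louder/quieter corner are immaterial because your final expression is symmetric under swapping the two corners.
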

\noindent\emph{Proof.} See Appendix~\ref{app:proofs}.

\thmref{thm:error_bound} establishes that, the worst case error between any trained monotonic NN-based noise model $\NN^{\textup{noise}}$
and the unknown noise function $\TrueNoise$,
is upper bounded by $\ErrorBound$. Note that $\ErrorBound$ holds for all states in $\InputDomain_{\Observer}$ even those who are not present in the training or test dataset. Moreover, $\ErrorBound$ can be computed by evaluating the neural network and the noise function at a discrete set of points namely the vertices of each hypercube $\Hypercube^{i j k \ell}_{\SectorIndex}$ (i.e., the points in the dataset $\Dataset_{\Test}$).
It also indicates that the bound $\ErrorBound$ comprises three key terms:
\begin{itemize}
	\item $\Term_1$, which captures the variation of the noise levels within the hypercube $\Hypercube_{\SectorIndex}^{i j k \ell} $;
	\item $\Term_2$, which quantifies the fluctuation of the neural network within each hypercube, serving as an indicator of the model smoothness; and
	\item $\Term_3$, which captures the systematic bias between the true noise function $\TrueNoise$ and the neural network prediction $\NN^\textup{noise}_{\SectorIndex}$ within each hypercube, serving as a measure of training accuracy relative to the ground truth.
\end{itemize}
These terms quantify the different sources of the neural network prediction error and help determine when finer sampling strategies are needed to ensure high fidelity.
Note that $\ConstC_{\SectorIndex}$ and $\ConstD_{\SectorIndex}$ are hypercube-specific constants that represent the average true and predicted noise levels, respectively, at the corners of the hypercube.

\section{Active Sampling for Optimal Dataset Acquisition}
\label{sec:active_sampling}

While~\thmref{thm:error_bound} holds for any test dataset $\Dataset_{\Test}$, we focus in this section on how to select $\Dataset_{\Test}$ that yields to an error bound $\delta$ that is below a user-specified threshold. Indeed, one can reduce the error bound $\delta$ by adding more samples to $\Dataset_{\Test}$, however, and as motivated before, collecting one data point is computationally demanding. 

Hence, motivated by~\thmref{thm:error_bound} and by the high cost of obtaining noise data in terms of time and resources, we design an active-sampling algorithm that, for any given input domain, iteratively collects an optimal number of samples in $\Dataset_{\Test}$ while guaranteeing a bound on the absolute error bound $\ErrorBound$ defined in ~\thmref{thm:error_bound}. 
The algorithm recursively subdivides hypercubes until the local metric $\Term_1$ falls below a user-specified target. This process both reduces $\Term_1$ and collects only those $\Dataset_{\Test}$ samples that most effectively reduce $\Term_1$. Consequently, it also helps decrease $\Term_2$ during training. This refinement not only lowers the local variation in the noise but also tightens the final error bound $\ErrorBound$.
Specifically, for any point inside a hypercube in the generated dataset,
the variation of the noise function $\Term_1$ is provably less than a user-specified threshold.
This guarantee ensures that models trained or validated on this dataset
are not subject to hidden variations in noise beyond the specified bound.

\subsection{Active Sampling Algorithm Overview}

\algref{alg:active_sampling} outlines the procedure for generating the dataset
for the azimuthal sector $\SectorIndex$. For simplicity—both in algorithmic exposition and in the visualization of \figref{fig:hypercube_subdivision}—we present the procedure in three dimensions; without loss of generality, the same construction extends directly to higher-dimensional inputs (e.g., the full 4-D $(\Speed,\Rpm,\Vdistance,\Hdistance)$ case). The algorithm operates by recursively partitioning the input space
defined over $\Speed$, $\Rpm$, and $\Vdistance$,
while keeping $\Hdistance$ and $\Azimuth$ fixed. Starting with the input bounds $(\underbar{\Speed}, \overbar{\Speed})$, $(\underbar{\Rpm}, \overbar{\Rpm})$, and $(\underbar{\Vdistance}, \overbar{\Vdistance})$, 
the root hypercube $\Hypercube_0$ is initialized using its two diagonal corners.
Next, the maximum variation in noise values within the hypercube is computed by evaluating the noise function $\TrueNoise$ at the corners of the hypercube.
If the variation in noise levels is below the specified threshold $\BoundedVarAct$,
the hypercube is accepted, and the two corners
and their corresponding noise levels are added to the dataset $\Dataset_{\Test}$.
If the variation exceeds the threshold,
the hypercube is then subdivided into eight smaller hypercubes as shown in~\figref{fig:hypercube_subdivision}.
The subdivision is performed by calculating the midpoint of each dimension,
then constructing the new hypercubes using the midpoints and the original corners.

\begin{algorithm}[t]
\caption{Active Sampling with Bounded Noise Variation}
\label{alg:active_sampling}
\KwIn{%
  Input bounds
  $(\underbar{\Speed}, \overbar{\Speed}), (\underbar{\Rpm}, \overbar{\Rpm}), (\underbar{\Vdistance}, \overbar{\Vdistance})$ for $\InputDomain_\Observer$;\\ \phantom{\textbf{Input:~}}%
  Input values $\Hdistance$, $\Azimuth \in \Azimuth_\SectorIndex$;\\ \phantom{\textbf{Input:~}}%
  Noise function $\TrueNoise$.
}
\KwOut{Certified test dataset $\Dataset_{\Test}$}
$\Hypercube^{0} \gets [\underbar{\Speed}, \overbar{\Speed}] \times [\underbar{\Rpm}, \overbar{\Rpm}] \times [\underbar{\Vdistance}, \overbar{\Vdistance}] \times \{ \Hdistance \} \times \{ \Azimuth \}$\;
$\Queue \gets [\Hypercube^{0}]$;\,\, $\Dataset_{\Test} \gets \emptyset$;\,\, $i \gets 0$\;
\While{$\Queue \neq \EmptyWord$}{
	$\Hypercube^{i+1} \gets \Front(\Queue);\,\, \Dequeue(\Queue)$\;
	$\State_1 \gets \overbar{\State}_{\Observer,\SectorIndex}^{i+1};\,\,$
	$\underbar{\NoiseValue} \gets \TrueNoise(\State_1)$\;
	$\State_3 \gets \uunderbar{\State}_{\Observer,\SectorIndex}^{i+1};\,\,$
	$\overbar{\NoiseValue} \gets \TrueNoise(\State_3)$\;
	\eIf{$\overbar{\NoiseValue} - \underbar{\NoiseValue} \leq \BoundedVarAct$}
	{
		$\Dataset_{\Test} \gets \Dataset_{\Test}
		\cup \left\{\left(\State_1, \underbar{\NoiseValue}\right), \left(\State_3, \ooverbar{\NoiseValue}\right)\right\}$\;
	}%
	{
		Find subdivision corners:
			$\left(\Speed_2,\Rpm_2, \Vdistance_2\right) \gets
			\frac{1}{2}\cdot\left( \Speed_1 + \Speed_3,\; \Rpm_1 + \Rpm_3,\; \Vdistance_1 + \Vdistance_3\right)$\;
		Subdivide $\Hypercube$ into 8 child hypercubes:\\
		\For{$ w \gets 0$ \KwTo $3$}{
			\ForEach{$(j,k,\ell) \in \{1,2\}^3$, $j\!+\!k\!+\!\ell=w$}{
				$\Enqueue(\Queue, \allowbreak 
				\left\{
					[\Speed_{j}, \Speed_{j+1}],
					[\Rpm_{k}, \Rpm_{k+1}],
					[\Vdistance_{\ell}, \Vdistance_{\ell+1}]
				\right\})$\;
			}
		}
	}
  $i \gets i + 1$\;
}
\Return{$\Dataset_{\Test}$}\;
\end{algorithm}

\begin{figure}[t]
    \centering
    \includegraphics[width=\columnwidth]{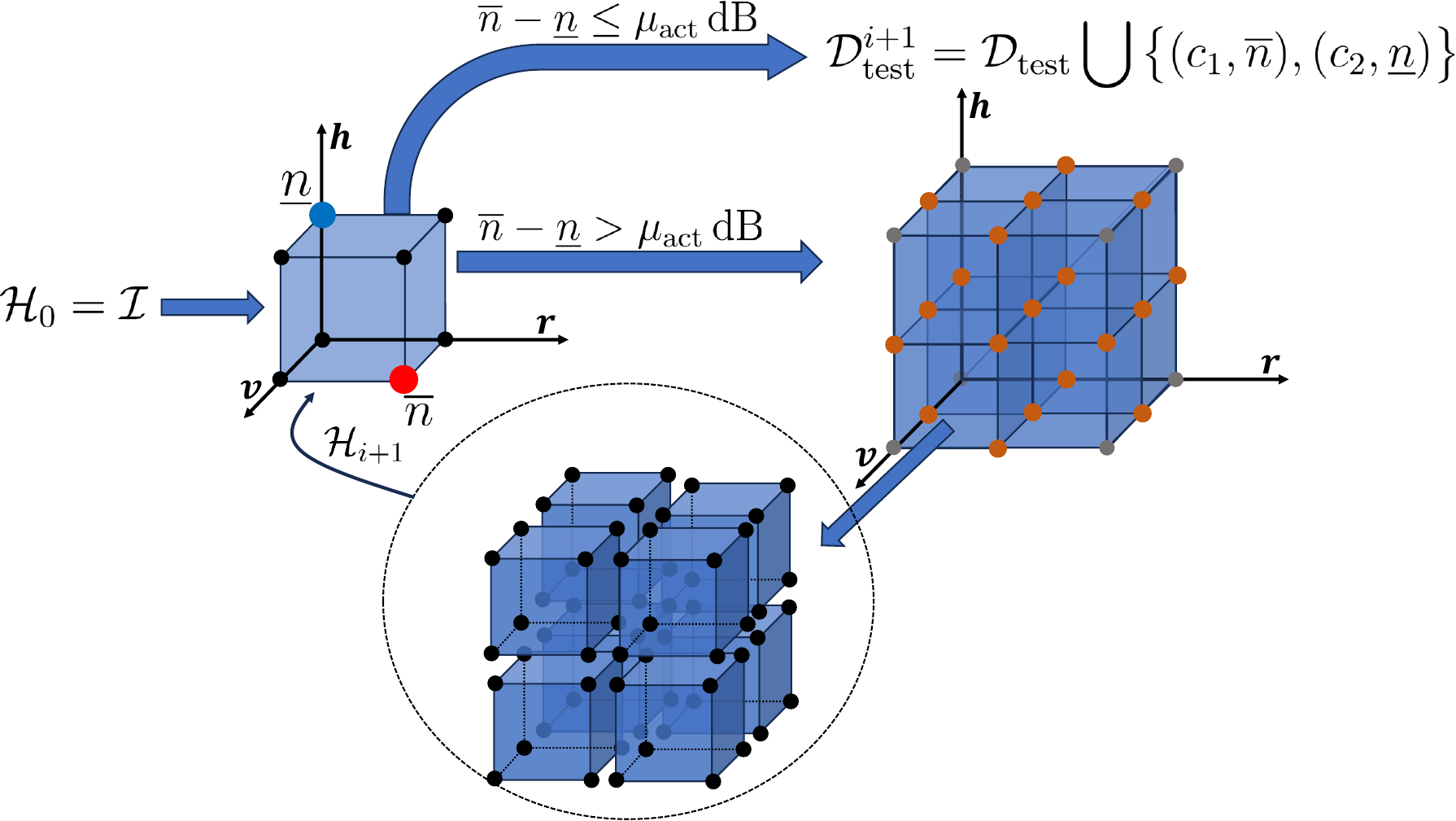}
    \caption{Flow Diagram of Active Sampling}
    \label{fig:hypercube_subdivision}
\end{figure}

To keep track of the hypercubes to be processed,
the algorithm uses a FIFO queue $\Queue$ that operates in a breadth-first manner.
At each iteration, the algorithm dequeues the hypercube at the front,
evaluates its noise variation, and either certifies it or subdivides it into smaller hypercubes.
These child hypercubes are then added to the queue following a ranked lattice traversal with lexicographic refinement,
where the rank is defined by the sum of the subdivision indices $(j,k,\ell)$.
This queue-based design guarantees that the algorithm systematically refines only those regions where the noise function exhibits high variability, while efficiently avoiding unnecessary evaluations in regions already deemed smooth.

Since the input domain is bounded and each subdivision reduces the hypercube size, it follows that the algorithm is guaranteed to terminate as every hypercube eventually becomes small enough to satisfy the noise threshold.
The resulting dataset carries a certified approximation of the noise function over the input domain, with explicit guarantees on maximum variation within each azimuthal partitioning sector.
This certified property is critical for providing bounded guarantees for a trained, monotonic NN-based noise model.

\section{Noise-Aware Motion Planning}
\label{sec:motion_planning}
In this section, we design a noise-aware motion-planning algorithm to address~\probref{prob:motion_planning}.
The algorithm is a variant of kinodynamic RRT*~\cite{webb2013kinodynamic}
that incorporates the noise constraints imposed by the NAZs.
To ensure compliance with noise ordinances, the algorithm relies on the ability to accurately and timely predict the noise levels at all observers given the eVTOL's state.
This is enabled by the certified noise model described in~\secref{sec:noise}. 
\subsection{Noise-Aware Kinodynamic RRT* Algorithm}
As a pre-processing step, our algorithm starts with \emph{tightening} the noise specifications in each NAZ to account for the worst case prediction error $\delta$. That is, we define the the \emph{$\ErrorBound$-tightened} NAZ set $\NAZmar$ as:
$$\NAZ^{\delta}_j = \left(\NAZ_{j}.\Observer, \NAZ_{j}.\InsNoiseThreshold-\ErrorBound, \NAZ_{j}.\AvgNoiseThreshold-\ErrorBound, \NAZ_{j}.{\Window}, \NAZ_{j}.\Ordinance\right).$$ 
Note that both the error bound $\delta$ and the noise thresholds $(\InsNoiseThreshold,\AvgNoiseThreshold)$ are all in dBA units (logarithmic units) and hence the subtraction $\InsNoiseThreshold-\ErrorBound$ and $\AvgNoiseThreshold-\ErrorBound$ should be computed as follows:
$$\overline{\TrueNoise} - \delta = 10\log_{10}\left(10^{\overline{\TrueNoise}/10}-10^{\ErrorBound/10}\right),$$
where $\overline{\TrueNoise}$ in the equation above can be either $\InsNoiseThreshold$ or $\AvgNoiseThreshold$.

\algref{alg:kino_rrtstar} presents the procedure for generating a noise-aware motion plan.
The algorithm incrementally builds a tree of trajectories that account for the noise constraints imposed by the \emph{$\ErrorBound$-tightened} NAZ set $\NAZmar$.
Then, it returns a noise-compliant motion plan $\Plan$ as a sequence of eVTOL states.
Each node $q$ in the tree encodes a candidate eVTOL state $\State$, its parent node,
the cost to reach this state, and the time at which the state was reached.
The tree $\Tree$ is initialized with the node $\Node_\Start$ corresponding to the start state $\State_{\Start}$ and time $t_{\Start}$.
Next, each iteration of the algorithm starts with sampling a random state $\State_\Rand \sim \Uniform(\InputDomain_\Observer)$.
The node $\Node_\Near \in \Tree$ with the nearest state to $\State_\Rand$ is then determined by using $\fnGetKinoDist$ (\linesref{line:fnGetKinoDist_start}{line:fnGetKinoDist_end} in \algref{alg:kino_rrtstar}) to evaluate the kinodynamic distance between $\State_\Rand$ and the state $\Node.\State$ for all $\Node \in \Tree$ (\lineref{line:nearest_node} of~\algref{alg:kino_rrtstar}).
Using $\Node_\Near.\State$ and $\State_\Rand$ as start and candidate end states,
the function $\fnSteer$ is called to attempt to find a new state $\Node_\New$ that is reachable from $\Node_\Near$.
Specifically, the function $\fnSteer$ computes multiple candidate trajectories from $\Node_\Near.\State$ to $\State_\Rand$, the feasibility of each is checked against both the kinodynamic and noise-based constraints.
The new node $\Node_\New$ is generated based on the feasible trajectory (if any) with the lowest cost by using $\fnGetCost$ (\algref{alg:getCost}), where $\fnGetCost$ returns the accumulated cost 
$J' = J + c$ and $c$ is a $\CostWeight$-weighted sum of
(i) segment length and kinodynamic deviation,
(ii) control-effort/smoothness penalties on $\Delta v$ and $v_{\Vdistance}$,
(iii) speed $v$ reward, and
(iv) proximity-shaping terms that activate near the goal.
Once added to the tree, the algorithm attempts to rewire the tree by connecting $\Node_\New$ to any nearby nodes at a lower cost.
Finally, the best node $\Node_\Best$ is updated whenever a new node is within distance $\epsilon_{\Goal}$ from the goal state $\State_{\Goal}$ and has a lower cost than the current best node.
The algorithm uses $\fnExtractPath$ to extract and return the motion plan $\Plan$.

\begin{algorithm}[!t]
\caption{Noise-Aware Kinodynamic RRT*}
\label{alg:kino_rrtstar} 
\KwIn{
	$\State_{\Start},\State_{\Goal}$, start and goal states; \\
	\hspace{3em}$\InputDomain$, bounded input domain; \\
	\hspace{3em}$\NoiseModel \gets \NoiseModel^{\textup{noise}}$, noise model; \\
	\hspace{3em}$\NAZmar$, $\ErrorBound$-tightened NAZ set; \\
	\hspace{3em}$\CostWeight$, cost weights; \\ 
	\hspace{3em}$N_{\max}$, step, goal bias, search radius, max iter; \\
	\hspace{3em}$\epsilon_{\Goal}$, goal tolerance;
}
\KwOut{%
  $\Plan \in \Planset \cup \{\emptyset\}$, motion plan or no feasible path
}

\SetKwProg{KwFnRRTs}{function}{\,:}{end}
\KwFnRRTs{%
	$ \mathtt{getMotionPlan}\,
	(\State_{\Start}$, $\State_{\Goal}$, $\InputDomain_\Observer$, $\NoiseModel$,
	$\NAZmar$, $\CostWeight$,
  $\Delta t$, $N_{\max}$, $\epsilon_{\Goal})$
}{
	$\Node_\Start \gets \left(\State_{\Start}, \emptyset, 0, t_{\Start} \right)$%
		\tcp*[r]{(state, parent, cost, time)}
	$\Tree \gets \{ \Node_\Start \}$;\,\,$\Node_\Best \gets (\emptyset, \emptyset, \infty, \infty)$\;
	\For{$n = 1$ to $\IterationCt$}{
		$\State_\Rand \gets \fnSampleState(\InputDomain)$\;

		$\Node_\Near \gets \argmin_{\Node \in \Tree} \fnGetKinoDist(\Node.\State, \State_\Rand)$\;%
		\label{line:nearest_node}

		$\Node_\New \gets \fnSteer(\Node_\Near, \State_\Rand, \InputDomain, \NoiseModel, \NAZmar, \CostWeight)$\;

		\lIf{$\Node_\New.\State = \emptyset$}{
			\textbf{continue}}%
				
		$\Tree \gets \Tree \cup \{\Node_\New\}$\;

		\ForEach{$\Node \in \Tree\setminus \{\Node_\New\}$}{
			$\Node' \gets \fnSteer(\Node, \Node_\New.\State, \InputDomain, \NoiseModel, \NAZmar, \CostWeight)$\;
			\lIf{$\Node'.\State \neq \emptyset \land
						\Node'.\Cost < \Node.\Cost $}{
						$\Node \gets \Node'$
					}
		}
		\lIf{$\fnDist(\Node_\New.\State, \State_\Goal) < \epsilon_{\Goal} \land \Node_\New.\Cost < \Node_\Best.\Cost$}{
				$\Node_\Best \gets \Node_\New$}
	}

	\lIf{$\Node_\Best.\State = \emptyset$}{
			\Return{$\emptyset$}
	}
	\lElse{\Return{$\fnExtractPath(\Node_\Best)$}}
}
\vspace{0.6em}
\SetKwProg{KwFnGetKinoDist}{function}{\,:}{end}
\KwFnGetKinoDist{$\fnGetKinoDist\,(\State,\State')$%
}{\label{line:fnGetKinoDist_start}
{$(\PosX,\PosY,\Vdistance,\Heading) \gets (\State.\PosX,\State.\PosY,\State.\PosZ,\State.\Heading) $\;
$(\PosX',\PosY',\Vdistance',\Heading') \gets (\State'.\PosX,\State'.\PosY,\State'.\PosZ,\State'.\Heading) $}\;
	\Return{$\!\sqrt{(\PosX \!\!-\! \PosX')^2 \!+\! (\PosY \!\!-\! \PosY')^2 \!+\! (\Vdistance \!\!-\! \Vdistance')^2 \!+\!
	\tfrac{1}{2\pi}(\Heading \!\!-\! \Heading')^2}$}\;%
	\label{line:fnGetKinoDist_end}
}

\vspace{0.6em}
\SetKwProg{KwFnSteer}{function}{\,:}{end}
\KwFnSteer{$\fnSteer\,(\Node_\Near, \State_\Rand, \InputDomain, \NoiseModel, \NAZmar,\CostWeight)$%
}{
	$\Node_\New \gets (\emptyset, \emptyset, \infty, t_\Near)$\;
	\For{$i = 1$ to $\AttemptCt$}{
		$(\Speed, \Vdistance, \Delta \Heading) \gets
		\Uniform([\underbar{\Speed}, \overbar{\Speed}] \times [\underbar{\Vdistance}, \overbar{\Vdistance}] \times [\Delta\underbar{\Heading}, \Delta\overbar{\Heading}])$\; \label{line:fnSteer_sample}
		$(\State',t) \gets \fnSimulate(\Node_\Near, \State_\Rand, \Speed, \Vdistance, \Delta \Heading, \Delta t)$\;
		\ForEach{$\NAZ \in \NAZmar$}{%
			$\NoiseValue_{\NAZ.\Observer} \gets \fnPredictNoise(\NoiseModel, \State', \NAZ.\Observer)$\;
			$\NoiseValue_{\Average,\NAZ.\Observer} \gets \fnSlidingAverage(\NoiseValue_{\NAZ.\Observer}, \NAZ.\Window)$;
		}
		\If{$
					\land_{\NAZ \in \NAZmar} (\NoiseValue_{\NAZ.\Observer} \!\leq\! \NAZ.\InsNoiseThreshold \allowbreak\land \NoiseValue_{\Average,\NAZ.\Observer} \!\leq\! \NAZ.\AvgNoiseThreshold)
					\allowbreak\land
					\fnGetCost(\Node_\Near.\State, \State',\State_\Goal,\CostWeight) < \Node_\New.\Cost
					\allowbreak\land
					\lnot\fnDetectCollision(\Node_\Near.\State, \State')
			$\label{line:noise_comparison}}{
				$\Node_\New \gets (\State', \Node_\Near, \fnGetCost(\Node_\Near.\State, \State',\State_\Goal,\CostWeight), t)$;
		}
	}
	\Return{$\Node_\New$}
} 

\end{algorithm}
A major difference from the standard kinodynamic RRT* algorithm is its incorporation of noise constraints into the planning process.
A candidate state is considered feasible if the trajectory (a) is feasible with respect to the eVTOL's kinematics, (b) does not collide with other eVTOLs, and (c) does not violate the noise constraints at any observer.
At each NAZ, the function $\fnPredictNoise$ uses the observer coordinates $\NAZ.\Observer$ and the eVTOL candidate state $\State'$ to compute $\State'_\Observer$,
\ie the eVTOL's state with respect to the observer, and further use it as input to the monotonic NN-based noise model $\NoiseModel^{\textup{noise}}$ to predict $\NoiseValue_{\NAZ.\Observer}$,
\ie the current noise level at that observer.
Next, the function $\fnSlidingAverage$ keeps track of the past noise levels at all observers, uses $\NoiseValue_{\NAZ.\Observer}$ to update this history, and computes the equivalent noise $\NoiseValue_{\Average,\NAZ.\Observer}$.
The noise values $\NoiseValue_{\NAZ.\Observer}$ and $\NoiseValue_{\Average,\NAZ.\Observer}$ are then compared against the noise thresholds $\NAZ.\InsNoiseThreshold$ and $\NAZ.\AvgNoiseThreshold$, respectively,
for each NAZ (\lineref{line:noise_comparison} of~\algref{alg:kino_rrtstar}).
Both conditions become part of the feasibility check for the candidate state to make sure that the noise ordinance defined in~\eqref{eq:ordinance_conditions} is satisfied.

\subsection{Physics-Based Sampling for Motion Planning}
\label{sec:motion_planning:pbs}

We now exploit the monotonicity property of the noise model to improve the sampling efficiency of the steering function $\fnSteer$.
Recall that control inputs $(\Speed, \Vdistance, \Delta \Heading)$ are sampled uniformly from the admissible set $\InputDomain_\Observer$ (\lineref{line:fnSteer_sample} of~\algref{alg:kino_rrtstar}).
In a \emph{uniform random sampling} (URS) strategy, the control inputs are drawn uniformly from the entire admissible set until a sample both (i) minimises distance to the target state and (ii) satisfies the noise constraints in~\defref{def:naz}.
This admissible set remains unchanged throughout the planning process, which can lead to inefficient sampling and long planning times, especially in scenarios with strict noise constraints.

To address this, we propose a \emph{physics-based sampling} (PBS) strategy that leverages the monotonicity property of the noise model.
When a tuple $(\Speed, \Vdistance, \Delta \Heading)$ fails either noise test, all tuples with higher $\Speed$ and lower $\Vdistance$ are pruned from the search space.
Subsequent samples are restricted to the remaining domain, which shrinks over time as more samples are drawn.
This approach significantly reduces the number of samples required to find a feasible trajectory, especially in scenarios with strict noise constraints as shown later in~\secref{sec:experiments}.
\algref{alg:steer_pbs} presents the steering function $\fnSteerPBS$ that implements this procedure.
Note that the pruning of the search space (\lineref{line:prune_search_space} of~\algref{alg:steer_pbs}) is only applied when the noise constraints are violated.
While the noise model is monotonic with respect to the rotor RPM $\RPM$, the pruning is only
applied to the control inputs $\Speed$ and $\Vdistance$, since the change in the noise level is insignificant as we show later in~\secref{sec:experiments}, specifically at~\figref{fig:subfig_2}.

\begin{algorithm}[t]
\caption{Steering Function for Noise-Aware Kinodynamic RRT* using Physics-Based Sampling}
\label{alg:steer_pbs} 
\vspace{0.6em}
\SetKwProg{KwFnSteer}{function}{\,:}{end}
\KwFnSteer{$\fnSteerPBS\,(\Node_\Near, \State_\Rand, \InputDomain, \NoiseModel, \NAZmar,\CostWeight)$%
}{
	$\Node_\New \gets (\emptyset, \emptyset, \infty, t_\Near)$\;
	\For{$i = 1$ to $\AttemptCt$}{
		$(\Speed, \Vdistance, \Delta \Heading) \gets
		\Uniform([\underbar{\Speed}, \overbar{\Speed}] \times [\underbar{\Vdistance}, \overbar{\Vdistance}] \times [\Delta\underbar{\Heading}, \Delta\overbar{\Heading}])$\;
		$(\State',t) \gets \fnSimulate(\Node_\Near, \State_\Rand, \Speed, \Vdistance, \Delta \Heading, \Delta t)$\;
		\ForEach{$\NAZ \in \NAZmar$}{%
			$\NoiseValue_{\NAZ.\Observer} \gets \fnPredictNoise(\NoiseModel, \State', \NAZ.\Observer)$\;
			$\NoiseValue_{\Average,\NAZ.\Observer} \gets \fnSlidingAverage(\NoiseValue_{\NAZ.\Observer}, \NAZ.\Window)$;
		}
		\If{$
					\land_{\NAZ \in \NAZmar} (\NoiseValue_{\NAZ.\Observer} \!\leq\! \NAZ.\InsNoiseThreshold \allowbreak\land \NoiseValue_{\Average,\NAZ.\Observer} \!\leq\! \NAZ.\AvgNoiseThreshold)
			$\label{line:noise_comparisonb}}{
				\lIf{$
					\allowbreak\land
					\fnGetCost(\Node_\Near.\State, \State',\State_\Goal,\CostWeight) < \Node_\New.\Cost
					\allowbreak\land
					\lnot\fnDetectCollision(\Node_\Near.\State, \State')
				$}{
				$\Node_\New \gets (\State', \Node_\Near, \fnGetCost(\Node_\Near.\State, \State',\State_\Goal,\CostWeight), t)$;
				}
			}
		\lElse{
			$\overbar{\Speed} \gets \Speed$; $\underbar{\Vdistance} \gets \Vdistance$ \label{line:prune_search_space}%
		}
	}
	\Return{$\Node_\New$}
} 
\end{algorithm}

\begin{algorithm}[t]
\caption{Cost Evaluation Function for Node}
\label{alg:getCost}
\SetKwProg{KwFnGetCost}{function}{\,:}{end}
\KwFnGetCost{$\fnGetCost\!\left(\State,\ \State',\State_\Goal,\CostWeight\right)$%
}{
    $(\omega_{\text{prox}}, \omega_{\text{speed}}, \omega_d,\omega_q,Q_\Vdistance,R_{v},R_{v_\Vdistance},q_\Vdistance,q_v) \gets \CostWeight$ \label{line:CW}
    
    $x \gets \State.x$;\quad
    $y \gets \State.y$;\quad
    $\Vdistance \gets \State.z$;\\
    $x' \gets \State'.x$;\quad
    $y' \gets \State'.y$;\quad
    $\Vdistance' \gets \State'.z$;\\
    $x_g \gets \State_{\Goal}.x$;\quad
    $y_g \gets \State_{\Goal}.y$;\quad
    $\Vdistance_g \gets \State_{\Goal}.z$;\\
    
    $J\gets q.\Cost$;

    $d_{\text{kin}} \gets \fnGetKinoDist(\State,\State')$;

    $\Delta v \gets \State'.v - \State.v$;\quad $v_\Vdistance \gets (\Vdistance'-\Vdistance)/\Delta t$;

    $d_\Goal \gets \sqrt{(\PosX'-\PosX_\Goal)^2+(\PosY'-\PosY_\Goal)^2+(\Vdistance'-\Vdistance_\Goal)^2}$;\\
    $\omega_{\text{prox}} \gets \max\!\big(0,\ D_0 - d_{\Goal}\big)$;
    
    $c \gets
      \omega_d\,d_{\text{kin}}
      + \omega_q\!\left(Q_\Vdistance \Vdistance' + R_{v}|\Delta v| + R_{v_\Vdistance}|v_\Vdistance|\right)
      - \omega_{\text{speed}}\,v'
      + \omega_{\text{prox}}\!\left(q_\Vdistance \Vdistance' + q_vv'\right)$;

    \Return{$J + \Cost$}%
    \label{line:fnGetCost_end}
    }
\end{algorithm}

\subsection{Correctness of the Motion Planning Algorithm}
The following proposition captures the correctness guarantees of \algref{alg:kino_rrtstar}.
\begin{proposition}
    Motion plans $\Plan^\Horizon$ generated by \algref{alg:kino_rrtstar} satisfies the constraints in Problem~\ref{prob:motion_planning}.
\end{proposition}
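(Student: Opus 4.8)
\emph{Proof approach.} The plan is to prove soundness via a structural invariant maintained on the search tree $\Tree$ built by \algref{alg:kino_rrtstar}, verifying the three requirements of \probref{prob:motion_planning} one at a time. If the call returns $\emptyset$ there is nothing to prove, so assume it returns $\Plan^\Horizon = \fnExtractPath(\Node_\Best)$, i.e.\ the sequence of states stored along the unique path from $\Node_\Start$ to $\Node_\Best$ in $\Tree$. For the endpoint and kinodynamic constraints I would argue directly from the construction: $\Tree$ is seeded with $\Node_\Start=(\State_\Start,\emptyset,0,t_\Start)$, whose state has position $\Vertiport_\Start$, so the first state of $\Plan^\Horizon$ meets the start constraint at $t_\Start$; every non-root node is produced inside $\fnSteer$ by $(\State',t)\gets\fnSimulate(\Node_\Near,\cdot)$, which advances the eVTOL motion model, so consecutive states satisfy $\State^{(t+1)}=\MotionModel(\State^{(t)})$ and hence $\Plan^\Horizon\in\Planset$ in the sense of \defref{def:motion_plan}; and $\Node_\Best$ is overwritten only on the branch $\fnDist(\Node_\New.\State,\State_\Goal)<\epsilon_\Goal$, so the terminal state lies within $\epsilon_\Goal$ of $\State_\Goal$, whose position is $\Vertiport_\Target$, reached at the arrival time recorded in $\Node_\Best$ (identified with $t_\Target$); taking $\epsilon_\Goal$ as the radius of the admissible goal region closes this. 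The separation guard $\lnot\fnDetectCollision$ handles inter-eVTOL safety identically, if one wishes to include it.

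\emph{Core: the noise constraints.} The main work is the invariant: for every $\Node\in\Tree$, every time instant $t$ on the trajectory segment ending at $\Node$, and every $\NAZ^{\delta}_j\in\NAZmar$, the \emph{predicted} levels satisfy $\fnPredictNoise(\NoiseModel^{\textup{noise}},\State^{(t)},\Observer_j)\le\NAZ^{\delta}_j.\InsNoiseThreshold$ and the predicted windowed average $\le\NAZ^{\delta}_j.\AvgNoiseThreshold$. This holds because $\fnSteer$ assigns $\Node_\New$ a non-empty state only when the conjunction on \lineref{line:noise_comparison} of \algref{alg:kino_rrtstar} (resp.\ \lineref{line:noise_comparisonb} of \algref{alg:steer_pbs}) is true, and there $\NoiseValue_{\NAZ.\Observer}$ and $\NoiseValue_{\Average,\NAZ.\Observer}$ are computed by $\fnPredictNoise$ and $\fnSlidingAverage$ exactly per \defref{def:naz}; the invariant is preserved by the rewiring loop since each rewire re-invokes $\fnSteer$ and re-verifies the same guards against the updated ancestry. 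The next step turns this into a guarantee on the \emph{true} noise $\TrueNoise$: by \thmref{thm:error_bound}, $|\TrueNoise(\State_{\Observer_j})-\NoiseModel^{\textup{noise}}(\State_{\Observer_j})|\le\ErrorBound$ for all $\State_{\Observer_j}\in\InputDomain_{\Observer_j}$, and the $\ErrorBound$-tightened thresholds were defined in the log/energy domain precisely so that $\NoiseModel^{\textup{noise}}\le\overline{\TrueNoise}\ominus\ErrorBound$ forces $\TrueNoise\le\overline{\TrueNoise}$ for $\overline{\TrueNoise}\in\{\InsNoiseThreshold_j,\AvgNoiseThreshold_j\}$; hence the true instantaneous level obeys $\InsNoiseThreshold_j$. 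For the equivalent-continuous level I would use that $\Noise_{\Average}$ is nondecreasing in each of the per-step dB levels it aggregates and that a uniform $+\ErrorBound$ offset of every one of those levels shifts $\Noise_{\Average}$ by exactly $+\ErrorBound$; hence a uniform per-step error of at most $\ErrorBound$ propagates to at most $\ErrorBound$ on the windowed metric, and combined with the tightened-average check this yields $\Noise_{\Average}(\State_{\Observer_j}^{(t)})\le\AvgNoiseThreshold_j$. Both conjuncts of \eqref{eq:ordinance_conditions} then hold, i.e.\ $\Ordinance_j(\TrueNoise(\State_{\Observer_j}^{(t)}))=\True$.

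\emph{Stitching.} Finally, since $\Plan^\Horizon$ is the concatenation of the segments along the root-to-$\Node_\Best$ path, every $t\in[t_\Start:t_\Target]$ lies on exactly one such segment, so the invariant applies at that $t$ for every $j\in[1:\NAZCt]$; combined with the endpoint and motion-model facts from the first paragraph this is precisely the list of constraints in \probref{prob:motion_planning}.

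\emph{Main obstacle.} The delicate point is the error-absorption step: one must fix the exact sense in which the $\ErrorBound$-bound of \thmref{thm:error_bound} interacts with the thresholds --- whether the deviation is treated as additive in dBA or as an added incoherent energy source --- and then check that the tightening $\overline{\TrueNoise}\ominus\ErrorBound = 10\log_{10}(10^{\overline{\TrueNoise}/10}-10^{\ErrorBound/10})$ actually discharges the implication $\{\NoiseModel^{\textup{noise}}\le\overline{\TrueNoise}\ominus\ErrorBound\}\Rightarrow\{\TrueNoise\le\overline{\TrueNoise}\}$ in that sense, uniformly over $\InputDomain_\Observer$, and likewise for the windowed metric. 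Two secondary technicalities also need care: that $\fnSlidingAverage$'s treatment of the initial window $[t-\Window:t]$ with $t-\Window<t_\Start$ (a grounded, silent eVTOL) is consistent with the definition of $\Noise_{\Average}$, so the windowed constraint is well posed near takeoff; and that rewiring does not silently invalidate the sliding-window history of a rewired node's descendants (or that $\fnExtractPath$ revalidates the returned path). Everything else is bookkeeping over the tree-and-segment structure of the output.
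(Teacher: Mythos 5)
Your proposal takes essentially the same route as the paper, whose entire proof is the single sentence that the claim ``is a straight forward consequence of the certified bound $\delta$ along with the soundness of the kinodynamic RRT* algorithm''; your sketch is that argument made explicit --- the guards in $\fnSteer$ enforce the $\ErrorBound$-tightened thresholds on the NN predictions, \thmref{thm:error_bound} transfers compliance to the true noise $\TrueNoise$, and the endpoint/motion-model constraints follow from the tree construction --- so it is correct in the same sense the paper intends. The ``main obstacle'' you flag is genuine (the paper's tightening subtracts $\ErrorBound$ in the energy domain while the bound of \thmref{thm:error_bound} is additive in dB, and the paper's one-line proof never reconciles the two), but since you follow the paper's approach and merely isolate that unaddressed point rather than introduce a new gap, there is nothing further to compare.
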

The proof is a straight forward consequence of the certified bound $\delta$ along with the soundness of the kinodynamic RRT* algorithm.

\section{Experimental Results and Evaluation}
\label{sec:experiments}

In this section, we present the numerical results of three sets of experiments designed to validate our proposed framework for noise-aware motion planning.
The first set validates the correctness of the monotonic NN-based noise model;
specifically, we verify the correctness of the monotonicity assumption and the theoretical upper bound on the absolute error of the noise model.
The second set of experiments evaluates the active sampling strategy for noise prediction, where we analyze the partial derivatives of the noise with respect to each input variable.
The third set evaluates the performance of the noise-aware motion planner in both single- and multi-eVTOL scenarios,
and demonstrates the effectiveness of the monotonic-aware active sampling compared to a uniform sampling strategy.

\subsection{Experimental Setup}

To train the monotonic NN-based noise model, we generate the training and testing datasets
by running a physics-based, high-fidelity simulations of the eVTOL.
We use the FLOWUnsteady~\cite{flow_unsteady} simulation suite to model the unsteady aerodynamic behavior of an eVTOL under various operating conditions.
Each run simulates a steady-state flight conditions, where the eVTOL speed, rotor RPM, and altitude are varied
using the following levels:
		$\Speed \in \{20, 30, 40, 50, 60\}~\SI{}{\meter\per\second}$,
		$\RPM \in \{500, 600, 700\}~\SI{}{\rpm}$,
		$\Vdistance \in \{50, 100, \ldots, 450\}~\SI{}{\meter}$,
resulting in a total of $5 \times 3 \times 9 = 135$ treatment conditions.
For each treatment $(\Speed, \RPM, \Vdistance)$, the PSU-WOPWOP~\cite{psu_wopwop} simulation framework is used to simulate the Overall Sound Pressure Levels (OASPLs) in A-weighted decibels (dBA) ---
a measurement of the intensity of acoustic noise calculated as the decibel equivalent of the root sum square pressure. The OASPL represents the total acoustic energy of the entire spectrum, i.e., 
$$
\TrueNoise_{\Observer} = \text{OASPL} = 10 \log_{10} \left( \frac{\overbar{p^2}}{p_0^2} \right),
\overbar{p^2} = \frac{1}{T} \sum_{i=1}^{T} p_i^2,
$$
where $p_0 = 20 \times 10^{-6}$ (reference sound pressure) and $p_i$ is A-weighted pressure
--- the measured sound pressure level of a noise source, but with the frequencies adjusted to reflect how the human ear perceives loudness essentially giving more weight to mid-range frequencies and less to very low and very high frequencies.

The noise $\TrueNoise(\Speed,\Rpm,\Vdistance,\Hdistance,\Azimuth)$ was computed 
at a grid of $41 \times 41$ ground-level observer $\Observer$ locations, where observers were placed at $\SI{100}{\meter}$ intervals in both the $X$ and $Y$ directions,
spanning a total area of $\SI{4}{\kilo\meter} \times \SI{4}{\kilo\meter}$.
In total, $135\times41\times41=\SI{226935}{}$ noise samples were obtained.
Each sample is associated with the flight-condition and relative-coordinate tuple $(\Speed,\Rpm,\Vdistance,\Hdistance,\Azimuth)$—with the eVTOL cruising with a fixed leftward heading—to form the training and test datasets.

\subsection{Correctness of the Noise Prediction Model}
\label{sec:experiments:noise_correctness}

\experiment{Correctness of the Monotonicity Assumption}
This experiment aims to validate the correctness of the monotonicity assumption of the noise prediction model described in~\assumpref{assump:noise_monotonicity}.
We do this by numerically analyzing the partial derivative of the noise function with respect to each of the input variables $\Speed$, $\RPM$, and $\Vdistance$, using the data collected from the simulation.
\figref{fig:noise_contour} shows the noise contours for three different levels of each variable while keeping the other two fixed.
Generally, the noise contours show a monotonic increase in the noise function as the speed (\figref{fig:subfig_1}) and rotor RPM (\figref{fig:subfig_2}) increase, and a monotonic decrease as the altitude (\figref{fig:subfig_3}) increases.
As shown in~\figref{fig:monotonicity_test}, the results of analyzing the numerical partial derivatives confirm that the monotonicity assumption holds for $>\SI{99.98}{\percent}$ of the noise samples
($n = \SI{226935}{}$).
The remaining $<\SI{0.02}{\percent}$ of the samples can be attributed to numerical errors in the simulation that manifest as sudden drops in the noise levels.

\begin{figure*}[t]\setlength{\subfigcapskip}{-0.5em}
    \centering
		\setlength{\FigHeight}{2.25in}
		\setlength{\FigWidth}{2.6in}
		\setlength{\FigXGap}{-0.2in}
		\pgfplotsset{
			every axis/.append style={
				height=\FigHeight, width=\FigWidth, enlargelimits=false,
				title style={yshift=-0.7em, xshift=0em},
				xlabel={$\PosX$ (\si{\meter}) $\times 10^3$}, ylabel={$\PosY$ (\si{\meter}) $\times 10^3$},
				xlabel style={yshift=0.5em}, ylabel style={yshift=-0.2em},
				colormap/viridis,	colorbar=false,
				colorbar style={ytick={10,15,20,25,30,35,40},ylabel={Noise Level (dBA)}},
				point meta min=7,	point meta max=43,
				xmin=0, xmax=4000, ymin=250, ymax=3750,
				xtick={0,1000,2000,3000,4000}, ytick={0,1000,2000,3000,4000},
				xticklabels={-2,-1,0,1,2}, yticklabels={-2,-1,0,1,2},
				axis on top, tick style={line width=0.5pt, black},
			},
			every patch/.append style={
				shader=flat, draw=black, line width=0.3pt,
			}
		}

    \subfigure{%
  \resizebox{0.98\textwidth}{!}{%
    \noindent\begin{minipage}[c]{\textwidth}
      \begin{minipage}[c]{5em}
          \resizebox{1.0\linewidth}{!}{
            \makebox[6em][l]{(a) $\TrueNoise$ vs.\ $\Speed$}%
          }%
    \end{minipage}%
      \begin{minipage}[c]{\dimexpr\textwidth-5em\relax}
        \includegraphics[width=\linewidth]{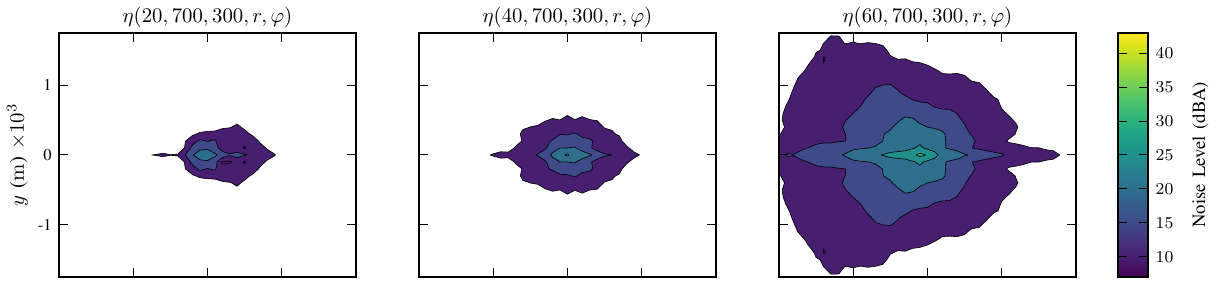}
      \end{minipage}%
    \end{minipage}%
  }%
  \label{fig:subfig_1}%
}%
\\[-0.8em]


    \subfigure{%
      \resizebox{0.98\textwidth}{!}{%
        \noindent\begin{minipage}[c]{\textwidth}
          \begin{minipage}[c]{5em}
              \resizebox{1.0\linewidth}{!}{
                \makebox[6em][l]{(b) $\TrueNoise$ vs.\ $\RPM$}%
              }%
        \end{minipage}%
          \begin{minipage}[c]{\dimexpr\textwidth-5em\relax}
            \includegraphics[width=\linewidth]{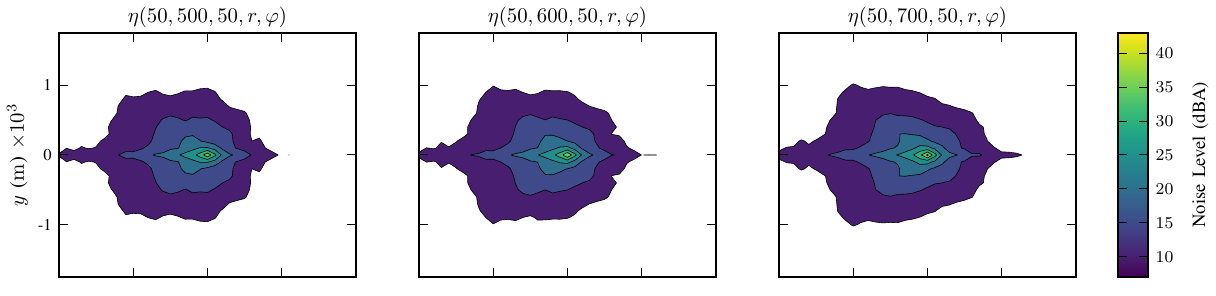}
          \end{minipage}%
        \end{minipage}%
      }%
      \label{fig:subfig_2}%
    }%
    \\[-0.8em]
            

    \subfigure{%
      \resizebox{0.98\textwidth}{!}{%
        \noindent\begin{minipage}[c]{\textwidth}
          \begin{minipage}[c]{5em}
              \resizebox{1.0\linewidth}{!}{
                \makebox[6em][l]{(c) $\TrueNoise$ vs.\ $\Vdistance$}%
              }%
        \end{minipage}%
          \begin{minipage}[c]{\dimexpr\textwidth-5em\relax}
            \includegraphics[width=\linewidth]{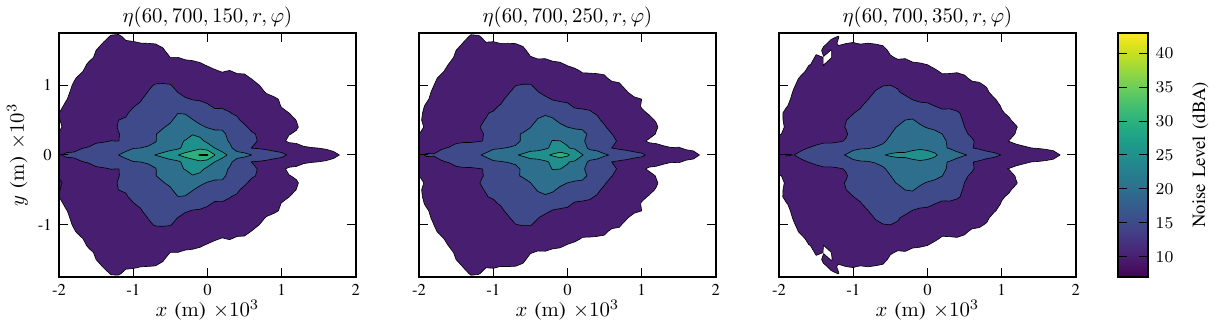}
          \end{minipage}%
        \end{minipage}%
      }%
      \label{fig:subfig_3}%
    }%

    \caption{Variation of noise contours with respect to
			(a) speed $\Speed = 20, 40, 60$~\si{\meter\per\second};
			(b) rotor RPM $\RPM = 500, 600, 700$~\si{\rpm}; and
			(c) altitude $\PosZ = 50, 150, 250$~\si{\meter}.
		The noise contours show a monotonic increase in the OASPL as the speed $\Speed$ and rotor RPM $\RPM$ increase, and a monotonic decrease as the altitude $\Vdistance$ increases. Contour levels are every \SI{5} dBA.}
    \label{fig:noise_contour}
\end{figure*} 

\begin{figure}[t]
	\centering
	\setlength{\FigHeight}{2.6in}
	\setlength{\FigWidth}{4.0in}
	\includegraphics[width=0.9\columnwidth]{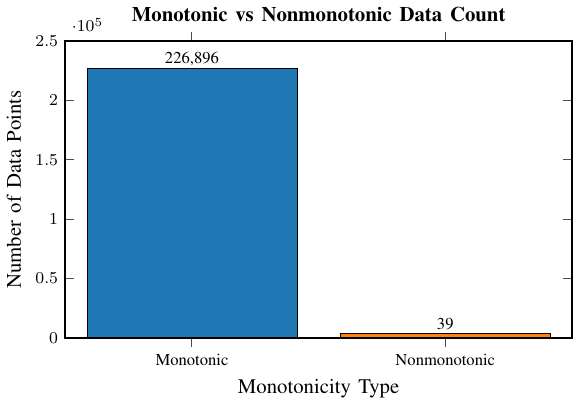}
	\caption{Distribution of monotonic vs non-monotonic data points in the dataset.
	The monotonicity assumption holds for $>\!\SI{99.98}{\percent}$ of the data points ($n = \SI{226935}{}$).}
	\label{fig:monotonicity_test}
\end{figure}

\experiment{Correctness of the Theoretical Upper Bound}
In this experiment, we validate the correctness of the theoretical upper bound on the noise prediction model derived in~\thmref{thm:error_bound}
by comparing the noise prediction error with the such bound.
The comparison is done using a validation dataset that contains noise samples that were not used during the training of the monotonic NN-based noise model, uniformly sampled over the entire range of the input variables.
The sector-specific certified bound $\ErrorBound_{\SectorIndex}$ is computed for each sector $\SectorIndex \in [1:22]$ as provided by~\thmref{thm:error_bound}.
\figref{fig:exp_true_vs_nn} visualizes the spatial distribution of the true noise levels,
the predicted values, and the corresponding prediction errors.
\figref{fig:formal_guarantee_unseen} compares those errors with the certified bounds of the corresponding sectors.
As shown, all prediction errors lie below their corresponding certified bounds,
thereby empirically validating the correctness of the certified bounds derived in~\thmref{thm:error_bound}.
This demonstrates that the sector-specific certified learning framework can accurately predict the noise levels of the eVTOL while providing formal guarantees on the prediction errors.

\begin{figure*}[t]\setlength{\subfigcapskip}{-0.5em}
    \centering
		\setlength{\FigHeight}{2.25in}
		\setlength{\FigWidth}{2.6in}
		\setlength{\FigXGap}{0.2in}
		\pgfplotsset{
			every axis/.append style={
				height=\FigHeight, width=\FigWidth, enlargelimits=false,
				title style={yshift=-0.7em, xshift=0em},
				xlabel={$\PosX$ (\si{\meter}) $\times 10^3$}, ylabel={$\PosY$ (\si{\meter}) $\times 10^3$},
				xlabel style={yshift=0.5em}, ylabel style={yshift=-0.2em},
				colormap/viridis,	colorbar=true,
				colorbar style={width=0.3cm, ylabel={Noise Level (dBA)}, ylabel style={yshift=0.3em}, at={(1.02,0.5)}, anchor=west},
				xmin=0, xmax=4000, ymin=250, ymax=3750,
				xtick={0,1000,2000,3000,4000}, ytick={0,1000,2000,3000,4000},
				xticklabels={-2,-1,0,1,2}, yticklabels={-2,-1,0,1,2},
				axis on top, tick style={line width=0.5pt, black},
			},
			every patch/.append style={
				shader=flat, draw=black, line width=0.3pt,
			}
		}

    \subfigure{%
      \resizebox{0.98\textwidth}{!}{%
        \noindent\begin{minipage}[c]{\textwidth}
          \begin{minipage}[c]{2em}
              \resizebox{1.0\linewidth}{!}{
                \makebox[2em][l]{(a)}%
              }%
        \end{minipage}%
          \begin{minipage}[c]{\dimexpr\textwidth-2em\relax}
            \includegraphics[width=\linewidth]{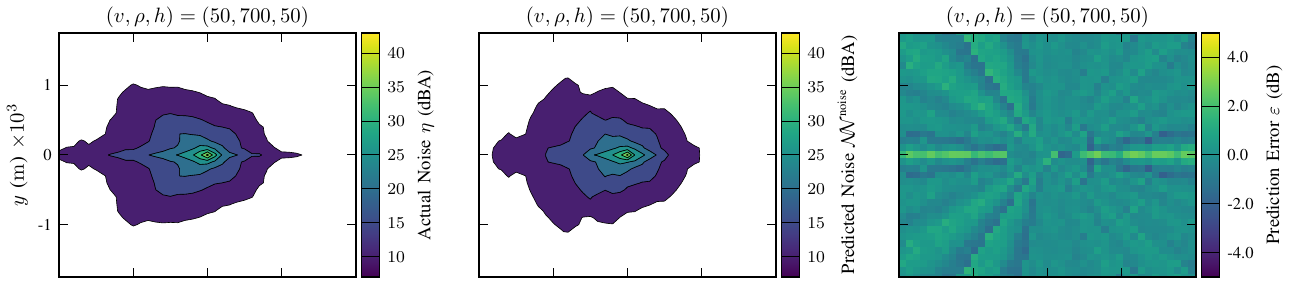}
          \end{minipage}%
        \end{minipage}%
      }%
      \label{fig:exp_true_vs_nn:a}%
    }%
    \\[-0.8em]


    \subfigure{%
      \resizebox{0.98\textwidth}{!}{%
        \noindent\begin{minipage}[c]{\textwidth}
          \begin{minipage}[c]{2em}
              \resizebox{1.0\linewidth}{!}{
                \makebox[2em][l]{(b)}%
              }%
        \end{minipage}%
          \begin{minipage}[c]{\dimexpr\textwidth-2em\relax}
            \includegraphics[width=\linewidth]{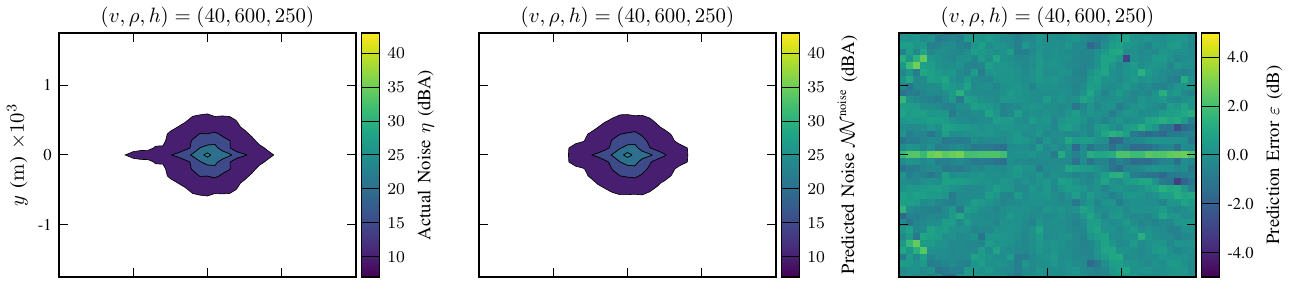}
          \end{minipage}%
        \end{minipage}%
      }%
      \label{fig:exp_true_vs_nn:b}%
    }%
    \\[-0.8em]
    
    \subfigure{
      \resizebox{0.98\textwidth}{!}{%
        \noindent\begin{minipage}[c]{\textwidth}
          \begin{minipage}[c]{2em}
              \resizebox{1.0\linewidth}{!}{
                \makebox[2em][l]{(c)}%
              }%
        \end{minipage}%
          \begin{minipage}[c]{\dimexpr\textwidth-2em\relax}
            \includegraphics[width=\linewidth]{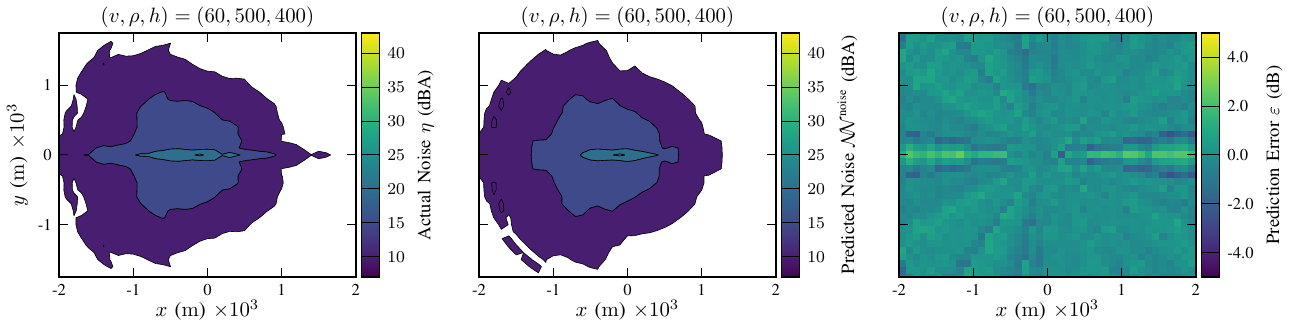}
          \end{minipage}%
        \end{minipage}%
      }%
      \label{fig:exp_true_vs_nn:c}%
    }
            
    \caption{Comparison of the actual noise levels $\TrueNoise(\Speed,\Rpm,\Vdistance,\Hdistance,\Azimuth)$ (left), the predicted noise levels $\NN^{\textup{noise}}(\Speed,\Rpm,\Vdistance,\Hdistance,\Azimuth)$ (middle), and the prediction errors $\Error(\Speed,\Rpm,\Vdistance,\Hdistance,\Azimuth)$ (right) for three different flight conditions. Contour levels are every \SI{5} dBA.}
    \label{fig:exp_true_vs_nn}
\end{figure*} 

\begin{figure}[t]
    \centering
		\setlength{\FigHeight}{2.1in}
		\setlength{\FigWidth}{5.0in}
    \includegraphics[width=0.98\columnwidth]{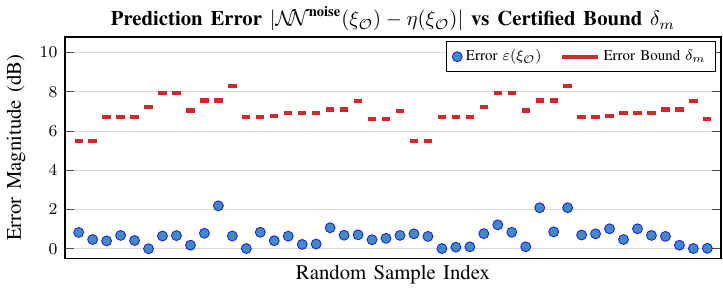}        
    \caption{Comparison of the noise prediction error (blue circles) with the certified bound for the corresponding sector (red dashes).}
    \label{fig:formal_guarantee_unseen}
\end{figure}

\subsection{Active Sampling for Noise Prediction}

\experiment{Evaluation of the Active Sampling Strategy}
In this experiment, we evaluate the effectiveness of the active sampling strategy in  tightly bounding the absolute error of the NN-based noise model collecting optimal amount of data.
Specifically, we compare the error bounds obtained using a uniform sampling strategy (baseline) with those obtained using the active sampling strategy described in~\secref{sec:active_sampling}.
For both strategies, \tabref{tab:active_sampling} reports the obtained sector-specific certified bounds $\ErrorBound_{\SectorIndex}$ defined in~\thmref{thm:error_bound} and their components $\Term_1$, $\Term_2$, and $\Term_3$.
The results show that the active sampling strategy significantly reduces the certified bounds compared to the uniform sampling strategy.
The relatively large error bounds provided by the baseline sampling strategy, ranging from $\SI{8.83}{dB}$ to $\SI{19.54}{dB}$,
highlight the need for a finer discretization of the input space to provide tighter bounds on the noise prediction errors.
However, collecting noise samples is computationally expensive, and making the grid uniformly finer inflates the sample count without guaranteeing a commensurate reduction in $\Term_1$. Under a baseline uniform strategy, extra samples are spread over the entire input space, so many fall in hypercubes that already satisfy the local-variation threshold and thus do not decrease $\Term_1$. By contrast, the proposed active sampling strategy refines only those hypercubes whose variation exceeds
$\BoundedVarAct$ and continues until $\Term_1 \leq \BoundedVarAct$. Consequently, to reach the same target—e.g., $\BoundedVarAct = \SI{1.5}{dB}$—uniform sampling must use substantially more samples overall than active sampling, which collects only the data necessary (near-minimally) to drive $\Term_1$ below the threshold. In our experiments, this targeted discretization yields much tighter certified bounds—$\SI{5.24}{dB}$ to $\SI{7.96}{dB}$—while keeping the sample count reasonable.

\begin{table}[t]
\centering
\caption{Sector-specific certified bounds for noise prediction under active sampling strategies.}
\label{tab:active_sampling}
\setlength{\tabcolsep}{1.0em}%
\resizebox{0.95\columnwidth}{!}{
\begin{tabular}{c|cccc|cccc}
\toprule
\multirow{2}{*}{$\SectorIndex$} & \multicolumn{4}{c|}{Uniform Sampling (Baseline)} & \multicolumn{4}{c}{Active Sampling ($\BoundedVarAct$ = 1.5)} \\[0.5ex]
 & $\Term_1$ & $\Term_2$ & $\Term_3$ & $\ErrorBound_\SectorIndex$ & $\Term_1$ & $\Term_2$ & $\Term_3$ & $\ErrorBound_\SectorIndex$ \\
\midrule
1  & \num{3.68} & \num{3.57} & \num{0.58} & \phantom{0}\num{8.83} & \num{1.50} & \num{1.93} & \num{0.81} & \num{5.24} \\
2  & \num{3.77} & \num{3.79} & \num{0.45} & \phantom{0}\num{9.01} & \num{1.50} & \num{1.91} & \num{1.08} & \num{5.49} \\
3  & \num{8.82} & \num{8.83} & \num{0.89} & \num{19.54} & \num{1.50} & \num{2.27} & \num{2.16} & \num{6.93} \\
4  & \num{4.02} & \num{3.71} & \num{0.97} & \phantom{0}\num{9.70} & \num{1.50} & \num{2.30} & \num{1.80} & \num{6.60} \\
5  & \num{4.15} & \num{3.79} & \num{0.95} & \phantom{0}\num{9.89} & \num{1.50} & \num{2.41} & \num{1.82} & \num{6.73} \\
6  & \num{5.20} & \num{4.93} & \num{2.08} & \num{13.21} & \num{1.50} & \num{2.09} & \num{2.14} & \num{6.73} \\
7  & \num{5.04} & \num{4.60} & \num{1.19} & \num{11.83} & \num{1.50} & \num{2.91} & \num{1.83} & \num{7.24} \\
8  & \num{6.72} & \num{6.68} & \num{2.18} & \num{16.58} & \num{1.50} & \num{3.15} & \num{2.31} & \num{7.96} \\
9  & \num{5.07} & \num{4.47} & \num{1.96} & \num{12.50} & \num{1.50} & \num{2.35} & \num{2.20} & \num{7.05} \\
10 & \num{5.21} & \num{4.87} & \num{2.74} & \num{13.82} & \num{1.50} & \num{2.16} & \num{2.90} & \num{7.56} \\
11 & \num{4.48} & \num{3.77} & \num{1.66} & \num{10.91} & \num{1.50} & \num{2.38} & \num{3.43} & \num{8.31} \\
12 & \num{4.01} & \num{3.96} & \num{0.41} & \phantom{0}\num{9.38} & \num{1.50} & \num{2.48} & \num{2.04} & \num{7.02} \\
13 & \num{4.06} & \num{4.01} & \num{0.59} & \phantom{0}\num{9.66} & \num{1.50} & \num{2.52} & \num{1.96} & \num{6.98} \\
14 & \num{4.15} & \num{3.79} & \num{0.96} & \phantom{0}\num{9.90} & \num{1.50} & \num{2.40} & \num{1.82} & \num{6.72} \\
15 & \num{3.60} & \num{3.35} & \num{1.63} & \phantom{0}\num{9.58} & \num{1.50} & \num{1.89} & \num{1.71} & \num{6.10} \\
16 & \num{3.94} & \num{3.25} & \num{2.12} & \num{10.31} & \num{1.50} & \num{1.99} & \num{2.26} & \num{6.75} \\
17 & \num{5.08} & \num{4.04} & \num{1.18} & \num{11.90} & \num{1.50} & \num{2.20} & \num{2.20} & \num{6.90} \\
18 & \num{8.60} & \num{8.45} & \num{2.14} & \num{20.19} & \num{1.50} & \num{1.90} & \num{2.50} & \num{6.90} \\
19 & \num{4.68} & \num{3.47} & \num{2.12} & \num{11.27} & \num{1.50} & \num{2.20} & \num{2.40} & \num{7.10} \\
20 & \num{4.93} & \num{4.53} & \num{2.23} & \num{12.69} & \num{1.50} & \num{2.00} & \num{3.04} & \num{7.54} \\
21 & \num{5.12} & \num{4.04} & \num{1.44} & \num{11.60} & \num{1.50} & \num{2.15} & \num{2.45} & \num{7.10} \\
22 & \num{3.89} & \num{3.34} & \num{1.76} & \phantom{0}\num{9.99} & \num{1.50} & \num{2.07} & \num{2.45} & \num{7.02} \\
\bottomrule
\end{tabular}}
\end{table}

\

\subsection{Noise-Aware Motion Planning for a Single eVTOL}
In the next set of experiments, we use the proposed noise-aware kinodynamic RRT* motion planning framework to plan a trajectory for an eVTOL navigating through an urban airspace while respecting noise ordinances defined at designated observer zones.
The simulation environment includes a 3D airspace bounded by
$(\PosX, \PosY, \PosZ) \in [0, 2200] \times [0, 2200] \times [0, 450] \SI{}{\,\meter}$.
It also includes three NAZs $\{\NAZ_j = (\Observer_j, \InsNoiseThreshold_j, \AvgNoiseThreshold_j, {\Window_j}, \Ordinance_j) \mid j \in \{1, 2, 3\}\}$, with each NAZ containing a single observer $\Observer_j$ located on the ground.
The eVTOL operates under a discrete-time kinematic model with a fixed time step of $\Window = \SI{5}{\second}$.
A pre-trained monotonic NN-based noise model $\NoiseModel^{\textup{noise}}$ is used by the framework to predict the noise levels at each observer based on the current eVTOL state.

The transition from current state $\State$ to the next state $\State'$ is determined by the kinematic model and the control input, where the latter is bounded by the following constraints:
\begin{align*}
|\Delta \Velocity| \leq \SI{5}{\meter\per\second^2},\quad
|\Delta \Vdistance | \leq \SI{5}{\meter\per\second},\quad
|\Delta \Heading| \leq \SI{5}{\degree\per\second}.
\end{align*}
In this scenario, we evaluate the behavior of a single-eVTOL system under the proposed framework.
All three NAZs share identical noise ordinances to set a unified upper bound on the allowable noise levels.

\experiment{Motion Planning for a Single eVTOL}
We consider three scenarios, defined by level of noise constraints specified in Definition~\ref{def:naz} as follows:
\begin{itemize}
	\item Relaxed constraints : $\InsNoiseThreshold = 45~\text{dBA}$, $\AvgNoiseThreshold = 43~\text{dBA}$. \\(relatively loud)
	\item Moderate constraints : $\InsNoiseThreshold = 35~\text{dBA}$, $\AvgNoiseThreshold = 30~\text{dBA}$. \\(quiet)
	\item Strict constraints : $\InsNoiseThreshold = 22~\text{dBA}$, $\AvgNoiseThreshold = 20~\text{dBA}$. \\(very quiet)
\end{itemize}
\figref{fig:exp_mp_single} shows the planned trajectories, noise profiles, and cruise speed for each of the three scenarios.
Under relaxed noise constraints (\figref{fig:exp_mp_single:a}), the planner can prioritize shorter trajectories by minimizing the distance traveled and maximizing the cruise speed, reaching the goal within \SI{60}{\second}.
Under moderate noise constraints (\figref{fig:exp_mp_single:b}), the planner adjusts the trajectory to avoid the central observer, resulting in a slightly longer path while maintaining a similar cruise speed, reaching the goal in \SI{65}{\second}.
Under strict noise constraints (\figref{fig:exp_mp_single:c}), the planner further modifies the trajectory to avoid all observers, increasing the average altitude and reducing the cruise speed, resulting in a longer path and a total travel time of \SI{90}{\second}.
This demonstrates the planner's ability to adapt to varying noise constraints while optimizing for travel time and distance.

\begin{figure*}[t] 
	\centering
	\pgfplotsset{
		every axis/.append style={
			grid=none,
			title style={yshift=-0.5ex},
			xlabel style={yshift=0.5ex},
			}
		}
	\newcommand{\AxisLabelSize}{\normalsize}
	\setlength{\FigWidth}{1.9in}\setlength{\FigHeight}{1.6in}
	\setlength{\FigXGap}{0.5in}\setlength{\FigYGap}{-0.1in}
	\setlength{\YlabelYShift}{-5pt}\setlength{\TitleYShift}{-10pt}
	\setlength{\LineWidth}{1.6pt}\setlength{\MarkSize}{2.6pt}

    \subfigure{
      \resizebox{0.98\textwidth}{!}{%
        \noindent\begin{minipage}[c]{\textwidth}
          \begin{minipage}[c]{2.5em}
              \resizebox{1.0\linewidth}{!}{
                \makebox[2em][l]{(a)}%
              }%
        \end{minipage}%
          \begin{minipage}[c]{\dimexpr\textwidth-3em\relax}
            \includegraphics[width=\linewidth]{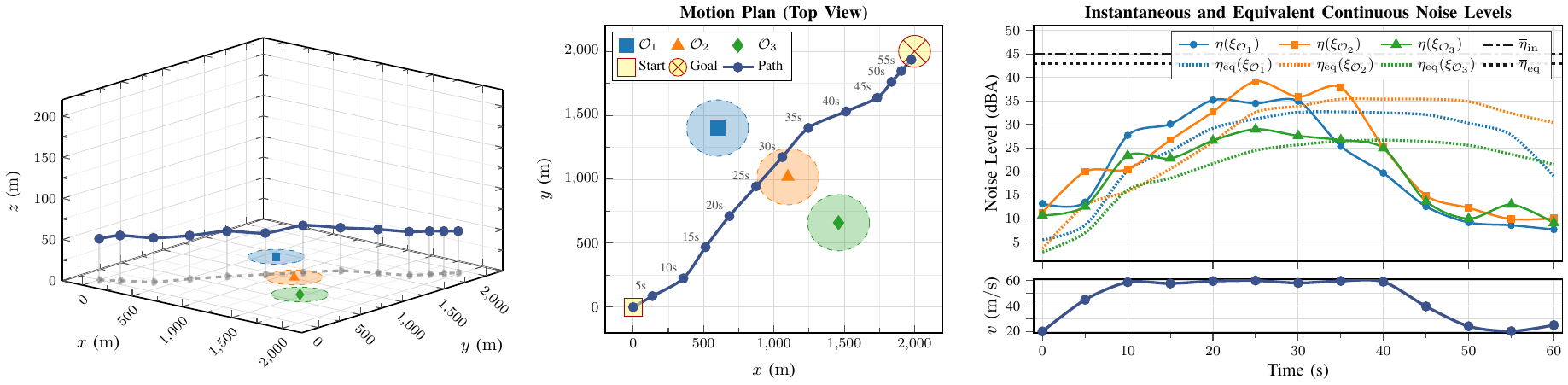}
          \end{minipage}%
        \end{minipage}%
      }%
      \label{fig:exp_mp_single:a}%
    }\vspace{-0.8em}
    
    \subfigure{
      \resizebox{0.98\textwidth}{!}{%
        \noindent\begin{minipage}[c]{\textwidth}
          \begin{minipage}[c]{2.5em}
              \resizebox{1.0\linewidth}{!}{
                \makebox[2em][l]{(b)}%
              }%
        \end{minipage}%
          \begin{minipage}[c]{\dimexpr\textwidth-3em\relax}
            \includegraphics[width=\linewidth]{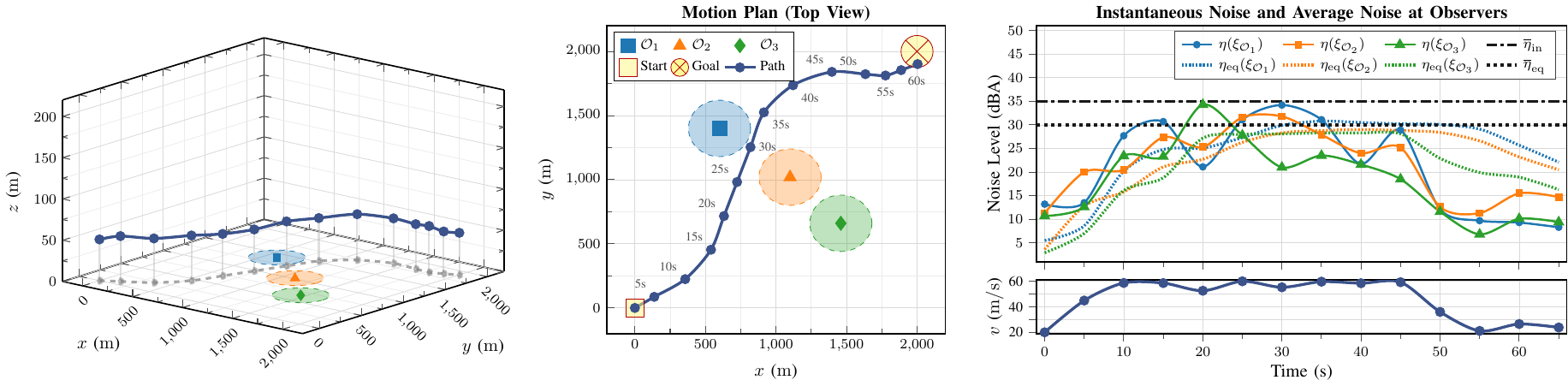}
          \end{minipage}%
        \end{minipage}%
      }%
      \label{fig:exp_mp_single:b}%
    }\vspace{-0.8em}

    \subfigure{
      \resizebox{0.98\textwidth}{!}{%
        \noindent\begin{minipage}[c]{\textwidth}
          \begin{minipage}[c]{2.5em}
              \resizebox{1.0\linewidth}{!}{
                \makebox[2em][l]{(c)}%
              }%
        \end{minipage}%
          \begin{minipage}[c]{\dimexpr\textwidth-3em\relax}
            \includegraphics[width=\linewidth]{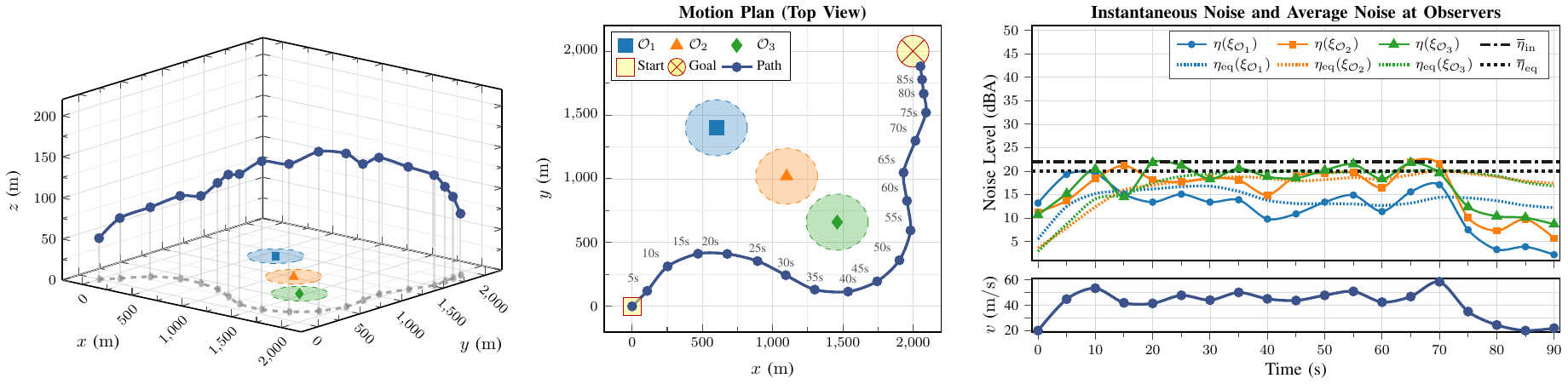}
          \end{minipage}%
        \end{minipage}%
      }%
      \label{fig:exp_mp_single:c}%
    }
	\caption{Planned trajectories and cruise speeds for the eVTOL in Experiment 1, and the corresponding noise levels at the observer locations.}
	\label{fig:exp_mp_single}
\end{figure*}

\subsection{Noise-Aware Motion Planning for Multiple eVTOLs}

\experiment{Motion Planning for Multiple eVTOLs}
This experiment evaluates the effectiveness of our noise-aware kinodynamic RRT* planner in coordinating multiple eVTOLs while ensuring compliance with dynamic noise constraints and avoiding collisions.
We consider a scenario involving three eVTOLs, each departing from a different vertiport and navigating to its own goal location.
The planning process follows an on-demand, sequential approach as described in \probref{prob:motion_planning},
on a first-come, first-served basis.
Once an eVTOL's motion plan is generated, it becomes fixed, constraining the planning space for subsequent eVTOLs.
	In this framework, the effective noise ordinance at each NAZ is dynamically updated to account for the noise contributions of previously scheduled eVTOLs. Formally, the remaining noise budget is given by:

\begin{align*}
  {\overline{\Noise}^*}_{\!\!\!\Instant}(\State_{n,\Observer}^{(t)}) &= 10 \log_{10}\!\!\left(10^{\,{\InsNoiseThreshold}/10}-\sum_{i=1}^{n}10^{\,\Noise(\State_{i,\Observer}^{(t)})/10}\right), \\
{\overline{\Noise}^*}_{\!\!\!\Average}(\State_{n,\Observer}^{(t)})
&= 10 \log_{10}\!\!\left(10^{{\AvgNoiseThreshold}/10}-
   \sum_{i=1}^{n} \frac{1}{\Window} \!\!\!\!\!\!\!\!\sum_{\;\;\;\;\;\;j = t - \Window}^{t} \!\!\!\!\!\!\!\!\!10^{\,\Noise(\State_{i,\Observer}^{(j)})/10}
  \right),
  \end{align*}
  where $\InsNoiseThreshold$ and $\AvgNoiseThreshold$ are equal to
$\NAZ^{\ErrorBound}\!\!. \InsNoiseThreshold$ and
$\NAZ^{\ErrorBound}\!\!. \AvgNoiseThreshold$, respectively; these are the noise limits from 
$\ErrorBound$-tightened NAZ.
  Let ${{\overline{\Noise}^*}_{\!\!\!\Instant}}(\State_{n,\Observer}^{(t)})$ and ${{\overline{\Noise}^*}_{\!\!\!\Average}}(\State_{n,\Observer}^{(t)})$ denote the renewed instantaneous and equivalent-continuous noise thresholds for the $n$-th eVTOL to be planned, that is, after the trajectories of the first $n-1$ eVTOLs have been fixed. We define $\State_{i,\Observer}^{(t)}$ be the state of eVTOL $i$ relative to observer location $\Observer$
at time $t$ (this extends~\eqref{eq:relative_state} to multiple vehicles). Planning proceeds sequentially: once $\Plan_1,\Plan_2,\dots,\Plan_{n-1}$ are fixed, renew the budgets using the formulas above, then plan $\Plan_{n}$ subject to the updated thresholds ${{\overline{\Noise}^*}_{\!\!\!\Instant}}(\State_{n,\Observer}^{(t)})$ and ${\overline{\Noise}^*}_{\!\!\!\Average}(\State_{n,\Observer}^{(t)})$.

Spatial separation is enforced through a spherical safety boundary of radius \SI{100}{\meter} surrounding each eVTOL at every timestep.
During RRT* tree expansion, nodes violating this boundary relative to any previously scheduled eVTOL are discarded.
The noise and collision constraints are time-varying due to the presence of other agents.

\figref{fig:exp_mp_multi} shows the motion plans for all three eVTOLs.
We observe that later eVTOLs tend to take longer and more elevated paths as the feasible planning space progressively shrinks due to the noise contributions of earlier flights.
We also note that tree pruning becomes more aggressive for later-scheduled eVTOLs, as the noise constraints become more restrictive, though the planner maintains reliable convergence towards a solution.

\begin{figure*}[t] 
	\centering
	\pgfplotsset{
		every axis/.append style={
			grid=none,
			title style={yshift=-0.5ex},
			xlabel style={yshift=0.5ex},
			}
		}
	\setlength{\FigWidth}{1.9in}\setlength{\FigHeight}{1.6in}
	\setlength{\FigXGap}{0.5in}\setlength{\FigYGap}{-0.1in}
	\setlength{\YlabelYShift}{-5pt}\setlength{\TitleYShift}{-10pt}
	\setlength{\LineWidth}{1.6pt}\setlength{\MarkSize}{2.6pt}
	%
    \includegraphics[width=0.95\linewidth]{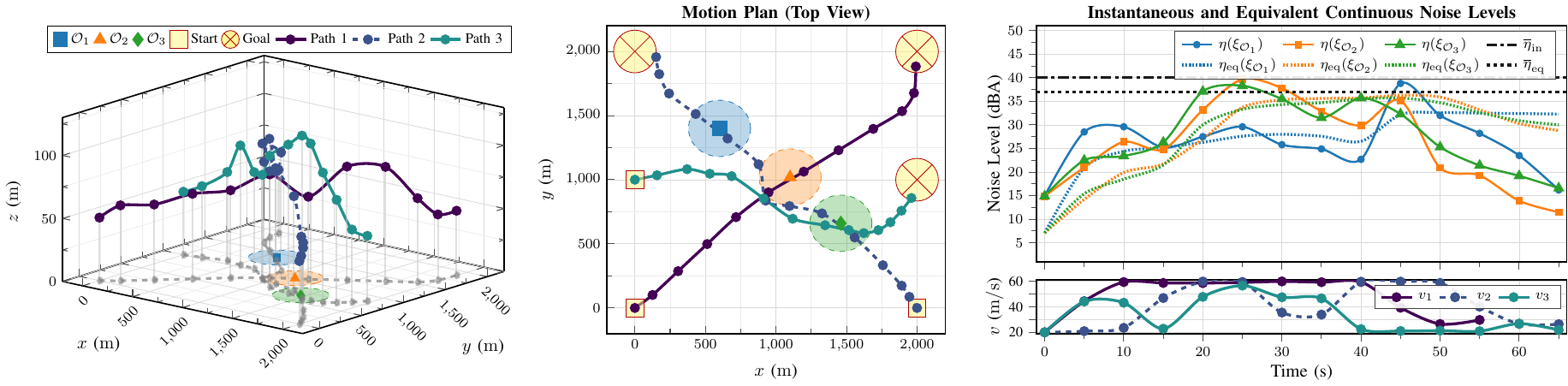}  
	\caption{Planned trajectories and cruise speeds for multiple eVTOLs, and the corresponding noise levels at the observer locations.}
	\label{fig:exp_mp_multi}
\end{figure*}

\subsection{Physics-Based Control Input Sampling}

\experiment{URS vs PBS Strategies}
In this experiment, we evaluate the impact of different control input sampling strategies on the performance of the proposed motion planning framework.
Specifically, we compare the two sampling strategies described earlier in~\secref{sec:motion_planning:pbs}:
\begin{itemize}
  \item URS, where control inputs $(\Speed, \Vdistance)$ are uniformly sampled from the entire admissible set; and 
  \item PBS, where $(\Speed, \Vdistance)$ are uniformly sampled from the input domain after pruning all tuples with higher $\Speed$ and lower $\Vdistance$ (see~\algref{alg:steer_pbs}).
\end{itemize}
The same environment setup as in~\secref{sec:experiments:noise_correctness} is used.  
For each steering strategy we execute 10 independent trials, recording the total planner iterations required to reach the goal.
\figref{fig:exp_pbs} shows the results of the experiment under two different levels of noise constraints.
Under relaxed noise constraints (\figref{fig:exp_pbs:a}), both sampling strategies exhibit similar performance.
Under tightened noise constraints (\figref{fig:exp_pbs:b}), the PBS strategy consistently outperforms the URS, requiring fewer number of iterations to reach the goal as the former systematically prunes high-noise regions from the search space.
These findings confirm that exploiting monotonicity property of the noise function can accelerate kinodynamic RRT* planning when noise ordinances are tight.

\begin{figure}[t]
  \centering
	\setlength{\FigWidth}{2.5in}\setlength{\FigHeight}{2.6in}
	\setlength{\FigXGap}{0.5in}\setlength{\FigYGap}{-0.1in}
	\pgfplotsset{
		BoxplotStyle/.style={
			fill=C7!20!white,
			boxplot/every median/.style={line width=1.5pt, draw=C3},
			boxplot/box extend=0.48,
		}
	}
  \subfigure[$\InsNoiseThreshold \! = \! 40~\text{dBA}$, $\AvgNoiseThreshold \! = \! 35~\text{dBA}$]
    {\includegraphics[width=0.48\columnwidth]{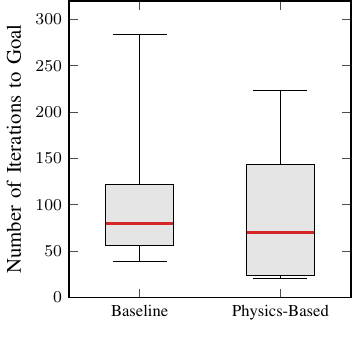}\label{fig:exp_pbs:a}}\hfill
  \subfigure[$\InsNoiseThreshold \! = \! 30~\text{dBA}$, $\AvgNoiseThreshold \! = \! 25~\text{dBA}$]
    {\includegraphics[width=0.48\columnwidth]{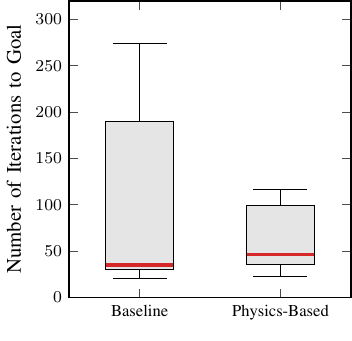}\label{fig:exp_pbs:b}}
  \caption{Comparison of iteration counts using Uniform Random Sampling (URS) and Physics-Based Sampling (PBS) under (a) relaxed and (b) strict noise constraints ($n=10$).
	}
  \label{fig:exp_pbs}
\end{figure}

\section{Conclusion}
\label{sec:conclusion}

In this paper, we presented a novel framework for noise-aware motion planning in eVTOLs.
The framework leverages a physics-based simulation environment to generate a comprehensive dataset using active sampling method for training a monotonic NN-based noise model, with formal guarantees on the bounds of the noise prediction errors.
This model is subsequently integrated into a kinodynamic RRT* planner, enabling the generation of noise-aware motion plans that adhere to community noise ordinances.
Our experiments demonstrated the effectiveness of the proposed approach in accurately predicting noise levels and improving planning efficiency through physics-based control input sampling strategies.
The results confirm that exploiting the monotonicity property of the noise function can accelerate kinodynamic RRT* planning when noise ordinances are tight. Furthermore, our findings highlight the importance of incorporating noise awareness into the planning process, paving the way for more efficient and reliable eVTOL operations in urban environments.

\bibliographystyle{IEEEtran}

\appendices
\section{Proof of \thmref{thm:error_bound}}\label{app:proofs}

Before we prove \thmref{thm:error_bound}, we first establish the following propositions.

\begin{proposition}
	\label{prop:norm_max}
	Let $\CompactSet \subset \Reals^n$ be a compact set and $\{\Hypercube^{i}\}_{i \in \IndexSet}$ be a hypercube discretization of $\CompactSet$ such that $\cup_{i\in\IndexSet} \Hypercube^{i} = \CompactSet$.
	Then, for any monotonic function $\FnFoo \colon \CompactSet \rightarrow \Reals$:
	\begin{align*}
		\max_{x \in \Reals^{n}} \big\|\FnFoo(x) \big\|
		\leq
	{\max_{x\in\Vertices(\Hypercube^{i})}} \big\|\FnFoo(x) \big\| .
	\end{align*}
    
\end{proposition}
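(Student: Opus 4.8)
The plan is to localize the global maximum to a single hypercube and then exploit the coordinatewise monotonicity of $\FnFoo$ to push the maximum over that hypercube onto its vertices; since the discretization consists of finitely many hypercubes, taking the maximum over all of them finishes the argument. (Here I read the right-hand side of the statement as $\max_{i\in\IndexSet}\max_{x\in\Vertices(\Hypercube^{i})}\|\FnFoo(x)\|$ and the left-hand side as the maximum over $\CompactSet$.)

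First I would fix an arbitrary hypercube $\Hypercube^{i}=[\underbar{a}_1,\overbar{a}_1]\times\cdots\times[\underbar{a}_n,\overbar{a}_n]$ of the discretization and establish $\max_{x\in\Hypercube^{i}}\|\FnFoo(x)\|\le\max_{x\in\Vertices(\Hypercube^{i})}\|\FnFoo(x)\|$. Since $\FnFoo$ is monotonic in the sense of \defref{def:monotonic_neural_network} --- in each coordinate $x_k$ it is either nondecreasing or nonincreasing --- I would designate two antipodal vertices $\underbar{v},\overbar{v}\in\Vertices(\Hypercube^{i})$: for $\overbar{v}$ take the upper endpoint $\overbar{a}_k$ in every coordinate where $\FnFoo$ is nondecreasing and the lower endpoint $\underbar{a}_k$ where it is nonincreasing, and for $\underbar{v}$ take the opposite choices. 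Substituting the coordinates of an arbitrary $x\in\Hypercube^{i}$ one at a time by the corresponding coordinate of $\overbar{v}$ never decreases the value of $\FnFoo$, and the symmetric substitutions toward $\underbar{v}$ never increase it, so $\FnFoo(\underbar{v})\le\FnFoo(x)\le\FnFoo(\overbar{v})$ for every $x\in\Hypercube^{i}$. Consequently $\|\FnFoo(x)\|\le\max\{\|\FnFoo(\underbar{v})\|,\|\FnFoo(\overbar{v})\|\}\le\max_{x\in\Vertices(\Hypercube^{i})}\|\FnFoo(x)\|$, and taking the supremum over $x\in\Hypercube^{i}$ --- attained at $\underbar{v}$ or $\overbar{v}$ --- yields the per-hypercube bound. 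Notably, this step needs neither compactness of $\CompactSet$ nor continuity of $\FnFoo$: the monotonicity inequalities alone suffice.

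Finally I would glue the hypercubes together. Since $\bigcup_{i\in\IndexSet}\Hypercube^{i}=\CompactSet$ with $\IndexSet$ finite, every $x\in\CompactSet$ belongs to some $\Hypercube^{i}$, hence
\begin{align*}
\max_{x\in\CompactSet}\|\FnFoo(x)\|
&= \max_{i\in\IndexSet}\ \max_{x\in\Hypercube^{i}}\|\FnFoo(x)\| \\
&\le \max_{i\in\IndexSet}\ \max_{x\in\Vertices(\Hypercube^{i})}\|\FnFoo(x)\|,
\end{align*}
which is the claimed inequality. The one delicate point --- and the step I expect to require the most care --- is the coordinatewise sandwiching when the signs of monotonicity differ across coordinates: one must define $\underbar{v}$ and $\overbar{v}$ relative to the direction of monotonicity in each coordinate separately and verify that the one-coordinate-at-a-time substitutions compose into the full inequality $\FnFoo(\underbar{v})\le\FnFoo(x)\le\FnFoo(\overbar{v})$. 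Everything beyond that is bookkeeping over the finite index set $\IndexSet$.
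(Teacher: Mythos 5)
Your proposal is correct and takes essentially the same route as the paper: monotonicity forces the extremum of $\FnFoo$ over each hypercube onto its vertices, and the finite cover $\bigcup_{i\in\IndexSet}\Hypercube^{i}=\CompactSet$ then gives the global bound. The paper merely wraps this in a contradiction argument and leaves the existence of the dominating vertex implicit, whereas you construct the two antipodal extremal vertices explicitly and verify the coordinatewise sandwich $\FnFoo(\underbar{v})\le\FnFoo(x)\le\FnFoo(\overbar{v})$ --- a difference of presentation, not of substance.
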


\begin{proof}
	Assume for the sake of contradiction that there exists a point $x^* \in \Reals^{n}$ such that:
	\begin{align*}
		\big\|\FnFoo(x^*) \big\| >
		\max_{x \in \Vertices(\Hypercube)} \big\|\FnFoo(x) \big\| .
	\end{align*}
	Since
	$\cup_{i\in\IndexSet} \Hypercube^{i} = \CompactSet$ by definition,
	there exists a hypercube $\Hypercube^{*}$ such that $x^* \in \Hypercube^{*}$, hence:
	\begin{align*}
		\big\|\FnFoo(x^*) \big\| >
		\max_{x \in \Vertices(\Hypercube^{*})} \big\|\FnFoo(x) \big\| .
	\end{align*}
\begin{enumerate}
	\item If $x^* \in \Vertices(\Hypercube^{*})$, then $\text{LHS} = \text{RHS}$, which contradicts the assumption.
	\item If $x^* \notin \Vertices(\Hypercube^{*})$, then by the monotonicity of $\FnFoo$, we can find a point $x' \in \Vertices(\Hypercube^{*})$ such that:
	\begin{align*}
		\big\|\FnFoo(x') \big\| \geq \big\|\FnFoo(x^*) \big\|,
	\end{align*}
	which contradicts the assumption.
\end{enumerate}
Therefore, the proposition holds.
\end{proof}

\begin{proposition}
	\label{prop:const_monotonicity}
	Given a function $\FnFoo \colon D \rightarrow \Reals$ that is monotonic over a domain $D \subset \Reals^n$, and a constant $C \in \Reals$.
    Let $\FnBar \colon x \mapsto \FnFoo(x) - C$ be a function defined over the same domain of $D$. Then:
    \begin{align*}
    &\FnFoo(x')\ge \FnFoo(x) \implies \FnBar(x')\ge \FnBar(x),
    \;\text{s.t }\forall x'\ge x
    \end{align*}
\end{proposition}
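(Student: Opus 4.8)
The plan is to prove this by a one-line direct argument, since the claim amounts to the fact that translation by a constant is order-preserving on $\Reals$. Fix any pair $x, x' \in D$ with $x' \ge x$, and suppose the hypothesis $\FnFoo(x') \ge \FnFoo(x)$ holds. Subtracting the constant $C$ from both sides of this scalar inequality — an operation that preserves the direction of inequalities in $\Reals$ — gives $\FnFoo(x') - C \ge \FnFoo(x) - C$, which by the definition $\FnBar(x) = \FnFoo(x) - C$ is exactly $\FnBar(x') \ge \FnBar(x)$. Since $x, x'$ were arbitrary subject to $x' \ge x$, the implication holds for every such pair, which is the assertion of the proposition.

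It is worth noting why such an elementary lemma is isolated here. In the proof of \thmref{thm:error_bound} we will need that the \emph{shifted} functions $\TrueNoise - \ConstC_{\SectorIndex}$ and $\NN^{\textup{noise}}_{\SectorIndex} - \ConstD_{\SectorIndex}$ inherit the monotonicity of $\TrueNoise$ and of the monotonic NN submodels over each hypercube $\Hypercube_{\SectorIndex}^{ijk\ell}$, so that \propref{prop:norm_max} can be applied to them; this is precisely the bridge that lets us bound the per-hypercube variation terms $\Term_1$ and $\Term_2$ by their values at the antipodal vertices $\overbar{\State}_{\Observer,\SectorIndex}^{ijk\ell}$ and $\uunderbar{\State}_{\Observer,\SectorIndex}^{ijk\ell}$. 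The hypothesis is stated conditionally on $\FnFoo(x') \ge \FnFoo(x)$ rather than assuming full monotonicity of $\FnFoo$ because, in the application, the monotonic direction differs coordinatewise (increasing in $\Speed, \Rpm$, decreasing in $\Vdistance, \Hdistance$); the conditional form lets us invoke the lemma along the relevant ordered pairs without repeating the sign bookkeeping.

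There is no genuine obstacle in this proof: the only care needed is to read ``$\ge$'' on $D$ as the partial order with respect to which $\FnFoo$ is monotone, and to note that the conclusion uses nothing about $\FnFoo$, $C$, or $D$ beyond the single assumed scalar inequality $\FnFoo(x') \ge \FnFoo(x)$ and the order-compatibility of subtraction on $\Reals$.
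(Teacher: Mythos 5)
Your proof is correct and takes essentially the same approach as the paper: the sole mathematical content in both is that subtracting the constant $C$ from both sides of the scalar inequality $\FnFoo(x') \ge \FnFoo(x)$ preserves it, yielding $\FnBar(x') \ge \FnBar(x)$. The paper merely wraps this step in a coordinatewise framing (WLOG non-decreasing in $x_i$, with $x'$ and $x$ differing in one coordinate), whereas you work directly from the conditional hypothesis as stated; the difference is purely presentational.
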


\begin{proof}
	Without loss of generality, assume that $\FnFoo$ is monotonically non-decreasing with respect to $x_i$.
	Let $x,x' \in \Reals^n$ such that $x_i' \geq x_i$ and $x_j = x_j'$ for all $j \neq i$.
	Then, by the monotonicity of $\FnFoo$, we have:
	\begin{align*}
		\FnFoo(x') &\geq \FnFoo(x) \\
		\FnFoo(x') - C &\geq \FnFoo(x) - C &&\text{(since $C$ is constant)}\\
		\FnBar(x') &\geq \FnBar(x). &&\text{(by definition of $\FnBar$)}
	\end{align*}
	Thus, $\FnBar$ is also monotonically non-decreasing with respect to $x_i$.
	The same argument applies to any other variable in the input space, thus proving the proposition.
\end{proof}


\begin{proposition}
	\label{prop:argmax_monotonicity}
	Let $\FnFoo \colon D \rightarrow \Reals$ be a monotonic function defined over a domain
	$ D \subset \Reals^n $, and let $C \in \Reals$ be a constant.
	Then:
	\begin{align*}
		\argmax_{x \in D} \FnFoo(x) = \argmax_{x \in D} \left(\FnFoo(x) - C\right).
	\end{align*}
\end{proposition}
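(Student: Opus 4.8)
The plan is to observe that subtracting a constant from the objective shifts every function value by the same amount, and therefore cannot change which points attain the maximum; monotonicity plays no essential role in the equality itself, and enters only through \propref{prop:const_monotonicity}, which guarantees that the shifted map $\FnBar = \FnFoo - C$ is again monotonic, so that the structural facts already established for monotonic functions (in particular attainment of the maximum at a vertex of a hypercube discretization via \propref{prop:norm_max}) apply verbatim to both $\FnFoo$ and $\FnFoo - C$.

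Concretely, I would argue by double inclusion on the two argmax sets. Fix $x^\star \in \argmax_{x\in D}\FnFoo(x)$, so that $\FnFoo(x^\star)\ge\FnFoo(y)$ for every $y\in D$. Subtracting the constant $C$ from both sides of this inequality gives $\FnFoo(x^\star)-C\ge\FnFoo(y)-C$ for every $y\in D$, i.e. $x^\star\in\argmax_{x\in D}\bigl(\FnFoo(x)-C\bigr)$. The reverse inclusion is symmetric: starting from $\FnFoo(x^\star)-C\ge\FnFoo(y)-C$ for all $y\in D$ and adding $C$ to both sides recovers $\FnFoo(x^\star)\ge\FnFoo(y)$. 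Hence the two sets coincide.

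The only point requiring a word of care is the reading of $\argmax$ as the (possibly empty, possibly non-singleton) set of maximizers: the chain of equivalences above holds in all cases, and when $D$ is compact---as in our application, where $D$ is a finite union of hypercubes---\propref{prop:const_monotonicity} combined with \propref{prop:norm_max} additionally ensures that this common set contains a vertex of the discretization. I do not anticipate a genuine obstacle; the statement is a routine consequence of the shift-invariance of $\argmax$, and it is stated separately only because it is invoked when the worst-case error bound is reduced to quantities evaluated at the hypercube corners in the proof of \thmref{thm:error_bound}.
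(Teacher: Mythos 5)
Your proposal is correct and follows essentially the same route as the paper's proof: both establish that subtracting a constant shifts all values uniformly, so the maximizers of $\FnFoo$ and $\FnFoo - C$ coincide, with the reverse direction handled symmetrically (your set-valued, double-inclusion phrasing is a slightly more careful reading of $\argmax$, but the argument is the same, and like the paper it uses no monotonicity at all).
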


\begin{proof}
	Let $x^* = \argmax_{x \in D} \FnFoo(x)$,
	and let $\FnBar(x) = \FnFoo(x) - C$,
	and $x' = \argmax_{x \in D} \FnBar(x)$.
	Then, by definition:
	\begin{align*}
		\FnFoo(x^*) &= \max_{x \in D} \FnFoo(x) \\
		\FnFoo(x^*) - C &= \max_{x \in D} \left(\FnFoo(x) - C\right) \\
		\FnBar(x^*) &= \max_{x \in D} \FnBar(x) \\
		x^* &= \argmax_{x \in D} \FnBar(x) = x'.
	\end{align*}
	That is:
	\begin{align*}
		x^* = \argmax_{x \in D} \FnFoo(x) \implies x^* = \argmax_{x \in D} \FnBar(x).
	\end{align*}
	The reverse direction follows similarly, thus proving the proposition.
\end{proof}

Now, we can provide the proof of~\thmref{thm:error_bound} as follows.
\begin{proof}[Proof of~\thmref{thm:error_bound}]
The maximum prediction error of the neural network $\NN^{\textup{noise}}$ over the input domain $\InputDomain_{\Observer}$ is given by:
\begin{align*}
	\Error
	& = \;\; \max_{\State_{\Observer} \in \InputDomain_\Observer} \;\;
		\big| \TrueNoise(\State_{\Observer}) - \NN^{\textup{noise}} (\State_{\Observer})\big|\nonumber \\
	& =
		\max_{\substack{\SectorIndex \in [1:\SectorCt] \\ \State_{\Observer} \in \InputDomain_{\Observer,\SectorIndex}}}
		\big|\TrueNoise(\State_{\Observer}) - \NN_{\!\SectorIndex}^{\textup{noise}}(\State_{\Observer})\big|\nonumber \\
	& = {\max_{\substack{\SectorIndex \in [1:\SectorCt] \\ \State_{\Observer} \in \InputDomain_{\Observer,\SectorIndex}}}}
		\big|\TrueNoise(\State_{\Observer}) \!-\! \ConstC_{\SectorIndex}
	- \NN_{\!\SectorIndex}^{\textup{noise}}(\State_{\Observer}) \!+\! \ConstD_{\SectorIndex}
	+ \ConstC_{\SectorIndex} \!-\! \ConstD_{\SectorIndex}\big| .
\end{align*}
By the triangle inequality, we split the error into four terms:
\begin{align}
	\Error
	& \leq {\max_{\substack{\SectorIndex\in[1:\SectorCt]\\\State_{\Observer}\in\InputDomain_{\Observer,\SectorIndex}}}}
		\left\{\begin{aligned}
			&\big| \TrueNoise(\State_{\Observer}) - \ConstC_{\SectorIndex} \big|
		\nonumber \\
	& 
	+ \big| \NN_{\!\SectorIndex}^{\textup{noise}}(\State_{\Observer}) - \ConstD_{\SectorIndex} \big| 
	\nonumber \\
	& 
	+ \big| \ConstC_{\SectorIndex} - \ConstD_{\SectorIndex} \big|
	\end{aligned} \right\} \nonumber \\
	& = {\max_{\substack{\SectorIndex\in[1:\SectorCt]\\\State_{\Observer}\in\InputDomain_{\Observer,\SectorIndex}}}}
	\left\{ \begin{aligned}
		& \big| \TrueNoise(\State_{\Observer}) -\TrueNoise(\State_{\Observer}\mid\tilde{\Azimuth}_{\SectorIndex}) \nonumber \\
		&
		\phantom{\big| \TrueNoise(\State_{\Observer})}
		+ \TrueNoise(\State_{\Observer}\mid\tilde{\Azimuth}_{\SectorIndex}) - \ConstC_{\SectorIndex} \big| \nonumber \\
		& \;
		+ \big| \NN_{\!\SectorIndex}^{\textup{noise}}(\State_{\Observer}) - \ConstD_{\SectorIndex} \big| 
	+ \big| \ConstC_{\SectorIndex} - \ConstD_{\SectorIndex} \big|
	\end{aligned} \right\} \nonumber \\
	& \leq {\max_{\substack{\SectorIndex\in[1:\SectorCt]\\\State_{\Observer}\in\InputDomain_{\Observer,\SectorIndex}}}}
	\left\{ \begin{aligned}
		& \big| \TrueNoise(\State_{\Observer}) -\TrueNoise(\State_{\Observer}\mid\tilde{\Azimuth}_{\SectorIndex})\big| 
		\nonumber \\
		& \phantom{1}
		+	\big|\TrueNoise(\State_{\Observer}\mid\tilde{\Azimuth}_{\SectorIndex}) - \ConstC_{\SectorIndex} \big| \nonumber \\
		& \phantom{1}
		+ \big| \NN_{\!\SectorIndex}^{\textup{noise}}(\State_{\Observer}) - \ConstD_{\SectorIndex} \big|
		+ \big| \ConstC_{\SectorIndex} - \ConstD_{\SectorIndex} \big|
	\end{aligned} \right\}. \nonumber
\end{align}
From~\assumpref{assump:angular_monotonicity_noise},
we can bound the first term by the maximum noise variation within $\InputDomain_{\Observer,\SectorIndex}$, that is:
\begin{align}
	\Error
	& \leq {\max_{\substack{\SectorIndex\in[1:\SectorCt]\\\State_{\Observer}\in\InputDomain_{\Observer,\SectorIndex}}}}
	\left\{ \begin{aligned}
		& 1 +\big|\TrueNoise(\State_{\Observer}\mid\tilde{\Azimuth}_{\SectorIndex}) - \ConstC_{\SectorIndex} \big| \nonumber \\
		& \phantom{1}
		+ \big| \NN_{\!\SectorIndex}^{\textup{noise}}(\State_{\Observer}) - \ConstD_{\SectorIndex} \big| 
		+ \big| \ConstC_{\SectorIndex} - \ConstD_{\SectorIndex} \big|\nonumber
		\end{aligned} \right\} \nonumber .
\end{align}
Since both $\TrueNoise(\State_{\Observer}\mid\tilde{\Azimuth}_{\SectorIndex})$ and $\NN_{\SectorIndex}^{\textup{noise}}(\State_{\Observer})$ are monotonic by the \assumpref{assump:noise_monotonicity} and \defref{def:monotonic_neural_network}, we can apply~\propref{prop:const_monotonicity} and~\propref{prop:argmax_monotonicity} to conclude that the second and third terms are also monotonic with respect to the input variables.
Since the sector-specific input domain $\InputDomain_{\Observer,\SectorIndex}$ is compact, we can apply~\propref{prop:norm_max} to all four terms in the error bound.
Thus, the inequality holds for the maximum over the vertices of the hypercube $\Hypercube_{\SectorIndex}^{i j k \ell}$, that is:
\begin{align}
	\Error
	& \leq \hspace{-1.0em}
	{\max_{\substack{\SectorIndex\in[1:\SectorCt]\\(i,j,k,\ell)\in\IndexSet_{\SectorIndex}\\\State_{\Observer}\in\Vertices(\Hypercube_{\SectorIndex}^{i j k \ell})}}} \hspace{-1.2em}
	\left\{ \begin{aligned}
		& 1 +\big|\TrueNoise(\State_{\Observer}\mid\tilde{\Azimuth}_{\SectorIndex}) - \ConstC_{\SectorIndex} \big| \nonumber \\
		& \phantom{1}
		+ \big| \NN_{\SectorIndex}^{\textup{noise}}(\State_{\Observer}) - \ConstD_{\SectorIndex} \big| 
		+ \big| \ConstC_{\SectorIndex} - \ConstD_{\SectorIndex} \big|\nonumber
		\end{aligned} \right\} \nonumber .
\end{align}
From the definition of $\Dataset_{\Test}$ in~\eqref{eq:test_dataset}:
\begin{align}
	\Error
		& \leq {\max_{\substack{\SectorIndex\in[1:\SectorCt]\\\State_{\Observer} \in \Dataset_{\Test,\SectorIndex}}}}
	\left\{ \begin{aligned}
	& 1 + \big|\TrueNoise(\State_{\Observer}\mid\tilde{\Azimuth}_{\SectorIndex}) - \ConstC_{\SectorIndex} \big|  \nonumber \\
	& \phantom{1}
	+ \big| \NN_{\SectorIndex}^{\textup{noise}}(\State_{\Observer}) - \ConstD_{\SectorIndex} \big| 
	+ \big| \ConstC_{\SectorIndex} - \ConstD_{\SectorIndex} \big|
	\end{aligned} \right\}, \nonumber
\end{align}
which concludes the proof.
\end{proof}


\end{document}